\numberwithin{equation}{section}
\theoremstyle{plain}
\newtheorem{thm}{\protect\theoremname}
  \theoremstyle{plain}
  \newtheorem{prop}[thm]{\protect\propositionname}
  \theoremstyle{remark}
  \newtheorem{rem}[thm]{\protect\remarkname}
  \theoremstyle{plain}
  \newtheorem{lem}[thm]{\protect\lemmaname}
\newcommand{\arc}[1]{\mathrm{arc}(#1)}
  \providecommand{\lemmaname}{Lemma}
  \providecommand{\propositionname}{Proposition}
  \providecommand{\remarkname}{Remark}
\providecommand{\theoremname}{Theorem}
\begin{document}

\title[Long-Time Asymptotics for IDNLS]{Long-time asymptotics for the integrable discrete nonlinear Schr\"odinger
equation: the focusing case}

\author{Hideshi Yamane }

\date{\today}

\address{Department of Mathematical Sciences, Kwansei Gakuin University, Gakuen
2-1 Sanda, Hyogo 669-1337, Japan}

\email{yamane@kwansei.ac.jp}

\thanks{This work was partially supported by JSPS KAKENHI Grant Number 26400127. }

\subjclass[2010]{Primary 35Q55; 
 Secondary 35Q15
 }

\keywords{discrete nonlinear Schr\"odinger  equation, soliton, Riemann-Hilbert
problem, nonlinear steepest descent, soliton resolution conjecture}

\maketitle


\begin{abstract}
We investigate the long-time asymptotics for the focusing integrable
discrete nonlinear Schr\"odinger  equation. Under generic assumptions
on the initial value, the solution is asymptotically a sum of 1-solitons.
We find different phase shift formulas in different regions. Along
rays away from solitons, the behavior of the solution is decaying
oscillation. This is one way of stating the soliton resolution conjecture.
The proof is based on the nonlinear steepest descent method. 
\end{abstract}

\section{Introduction}

In this article we study the long-time behavior of the solutions to
the focusing integrable discrete nonlinear Schr\"odinger  equation
(IDNLS) introduced by Ablowitz and Ladik (\cite{AblowitzLadikJMP76})
on the doubly infinite lattice (i.e. $n\in\mathbb{Z}$): 
\begin{equation}
i\frac{d}{dt}R_{n}+(R_{n+1}-2R_{n}+R_{n-1})+|R_{n}|^{2}(R_{n+1}+R_{n-1})=0.\label{eq:IDNLS}
\end{equation}
It is a discrete version of the focusing nonlinear Schr\"odinger
 equation (NLS) 
\[
iu_{t}+u_{xx}+2u|u|^{2}=0.
\]
The equation \eqref{eq:IDNLS} can be solved by the inverse scattering
transform (IST). Here we employ the Riemann-Hilbert formalism of IST
following \cite{APT}. Eigenvalues appear in quartets of the form
$(\pm z_{j},\pm\bar{z}_{j}^{-1})$. 

In the reflectionless case, it is well known (\cite{AblowitzLadikJMP76})
that \eqref{eq:IDNLS} admits a multi-soliton solution under generic
assumptions. When there is only one quartet of eigenvalues including
$z_{1}=\exp(\alpha_{1}+i\beta_{1})$ with $\alpha_{1}>0$, $R_{n}(t)$
is a bright 1-soliton solution, namely, 
\[
R_{n}(t)=\mathrm{BS}\left(n,t;z_{1},C_{1}(0)\right),
\]
where $C_{1}(0)$ is the norming constant and 
\begin{eqnarray*}
\mathrm{BS}(n,t;z_{1},C_{1}(0)) & = & \frac{C_{1}(0)}{|C_{1}(0)|}\exp\bigl(-i[2\beta_{1}(n+1)-2w_{1}t]\bigr)\\
 &  & \times\sinh(2\alpha_{1})\mathrm{sech}[2\alpha_{1}(n+1)-2v_{1}t-\theta_{1}].
\end{eqnarray*}
Here $\mathrm{BS}$ stands for 'bright soliton' and 
\begin{align*}
 & v_{1}=-\sinh(2\alpha_{1})\sin(2\beta_{1}),\quad w_{1}=\cosh(2\alpha_{1})\cos(2\beta_{1})-1,\\
 & \theta_{1}=\log|C_{1}(0)|-\log\sinh(2\alpha_{1}).
\end{align*}
The solution $\mathrm{BS}(n,t;z_{1},C_{1}(0))$ involves a traveling
wave with $\mathrm{sech}$ profile. We denote its velocity by $\mathrm{tw}(z_{1})$.
In other words, 
\[
\mathrm{tw}(z_{1})=\mathrm{tw}(\exp(\alpha_{1}+i\beta_{1}))=\alpha_{1}^{-1}v_{1}=-\alpha_{1}^{-1}\sinh(2\alpha_{1})\sin(2\beta_{1}).
\]

In the present paper, we study what happens if the reflection coefficient
corresponding to $R_{n}(0)$ does not vanish identically. If the quartets
of eigenvalues are $(\pm z_{j},\pm\bar{z}_{j}^{-1})$ with $\mathrm{tw}(z_{j})<\mathrm{tw}(z_{j'})\,(j<j')$,
then we have, formally, 
\begin{align*}
R_{n}(t)\sim & \sum_{j\in G_{1}}\mathrm{BS}\left(n,t;z_{j},\delta_{n/t}(0)\delta_{n/t}(z_{j})^{2}p_{j}T(z_{j})^{-2}C_{j}(0)\right)\\
 & +\sum_{j\in G_{2}}\mathrm{BS}\left(n,t;z_{j},p_{j}T(z_{j})^{-2}C_{j}(0)\right),\\
p_{j}= & \prod_{k>j}z_{k}^{2}\bar{z}_{k}^{-2},\\
T(z_{j})= & \prod_{k>j}\frac{z_{k}^{2}(z_{j}^{2}-\bar{z}_{k}^{-2})}{z_{j}^{2}-z_{k}^{2}},
\end{align*}
under generic assumptions. Here $|\mathrm{tw}(z_{j})|<2$ for $j\in G_{1}$
and $|\mathrm{tw}(z_{j})|\ge2$ for $j\in G_{2}$. See \eqref{eq:deltaintegral}
and Remark \ref{rem:delta} for the definition of $\delta_{n/t}(z)=\delta(z)$.
In the reflectionless case we have $\delta_{n/t}(0)=\delta_{n/t}(z_{j})=1$
and recover the known formula about the asymptotic behavior of a multi-soliton.
See Theorems \ref{thm:main1}, \ref{thm:main2} and \ref{thm:main3}
for details.

We review some previous results about the long-time asymptotics of
some integrable equations in the perturbed (i.e. not reflectionless)
case. In \cite{Nov}, the asymptotics for the focusing IDNLS was studied
in the solitonless case. The pioneering work \cite{DZ} established
the method of nonlinear steepest descent, which is employed in the
present paper and all the works quoted below. The defocusing NLS was
dealt with in \cite{Important}. The appearance of soliton terms in
the focusing case was observed in \cite{DP}, \cite{Fokas-Its96},
\cite{KamvissisNLS95} and\cite{KamvissisNLS96} among others. The
present author investigated the defocusing IDNLS in \cite{IDNLS,IDNLS2}.
The Toda lattice was studied in \cite{Kamvissis} under the assumption
of the absence of solitons and later in \cite{KTsolitonregion}. Our
treatment of solitons is based on the method of \cite{TodaRarefaction},
which was used in \cite{Teschl_KdV} and \cite{KTsolitonregion}.
Another way to study this kind of problems is the use of Gelfand-Levitan-Marchenko
equations (e.g. \cite{Tanaka}). 

The above mentioned works and the present article are related to a
broad statement called \textit{the soliton resolution conjecture.
}Roughly speaking, it asserts that a solution to any reasonable solution
to a (not necessarily integrable) nonlinear dispersive equation, typically
an NLS, resolves into a sum of solitons (or soliton-like states) and
a decaying  radiation part. See \cite{Tao} for a brief survey. 

The arguments in Sections 2 and 3 apply to  the half-space $t>0,\,n\in\mathbb{Z}$.
In Sections 4-7 we study the region $|n|<2t$. This is the case where
there are four distinct saddle points on $|z|=1$. In Section 8 we
treat two other regions, in which stationary points have different
configurations. 

The defocusing IDNLS admits dark solitons which satisfy non-zero boundary
conditions (\cite{AblowitzBiondiniPrinari}) in the reflectionless
case. It would be an interesting and difficult task to study its solutions
in a more general setting. 

\section{Inverse scattering transform }

In this section we explain some facts about the inverse scattering
transform for the focusing IDNLS following \cite{AblowitzLadikJMP76}
and \cite[Chap. 3]{APT}. 

First we discuss the unique solvability of the Cauchy problem for
\eqref{eq:IDNLS}.
\begin{prop}
\label{prop:ODE}Let $p$ be a non-negative integer. Assume that the
initial value $R(0)=\{R_{n}(0)\}_{n\in\mathbb{Z}}$ satisfies 
\begin{align}
 & \|R(0)\|_{1,p}=\sum_{n=-\infty}^{\infty}(1+|n|)^{p}|R_{n}(0)|<\infty.\label{eq:l1}
\end{align}
Then \eqref{eq:IDNLS} has a unique solution in $\ell^{1,p}=\{\{r_{n}\}_{n=-\infty}^{\infty}\colon\sum(1+|n|)^{p}|r_{n}|<\infty\}$
for $0\le t<\infty$.
\end{prop}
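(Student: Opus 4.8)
The plan is to treat \eqref{eq:IDNLS} as an ODE in the Banach space $\ell^{1,p}$ and apply the standard Picard–Lindelöf machinery, upgrading the resulting local solution to a global one by means of a conserved quantity that controls the $\ell^{\infty}$ (hence $\ell^{1,p}$-coefficient) norm. First I would rewrite \eqref{eq:IDNLS} as
\[
i\frac{d}{dt}R_{n}=-(R_{n+1}-2R_{n}+R_{n-1})-|R_{n}|^{2}(R_{n+1}+R_{n-1})=:F_{n}(R),
\]
and check that the nonlinear map $F=\{F_n\}$ is locally Lipschitz on $\ell^{1,p}$. The linear part $R\mapsto\{R_{n+1}-2R_{n}+R_{n-1}\}$ is clearly bounded on $\ell^{1,p}$ (shifts are isometries of $\ell^{1,p}$, since $(1+|n\pm1|)^p\le 2^p(1+|n|)^p$). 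For the cubic term one uses that $\ell^{1,p}$ is a Banach algebra under pointwise multiplication up to a constant — or, more elementarily, that $\|\,|R_n|^2(R_{n+1}+R_{n-1})\,\|_{1,p}\le C\|R\|_{\infty}^{2}\|R\|_{1,p}\le C\|R\|_{1,p}^{3}$ — together with the elementary estimate $|\,|a|^2a'-|b|^2b'|\le C(\|a\|^2+\|b\|^2)\|a-a'\text{-differences}\|$ to get the Lipschitz bound on bounded sets. Hence by the abstract Cauchy–Lipschitz theorem there is a unique maximal solution $R\in C^{1}([0,T_{\max});\ell^{1,p})$.

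Next I would establish global existence, i.e. $T_{\max}=\infty$. The key is an a priori bound. Multiplying \eqref{eq:IDNLS} by $\bar R_n$, taking imaginary parts and summing gives control of $\sum_n \log(1+|R_n|^2)$ (this is the standard conserved norm for the Ablowitz–Ladik system: $\prod_n(1+|R_n(t)|^2)$ is independent of $t$), which in the focusing case bounds $\|R(t)\|_\infty$ and in fact $\sup_n|R_n(t)|$ uniformly on $[0,T_{\max})$. Combined with a Gronwall argument applied to $t\mapsto\|R(t)\|_{1,p}$ — from the ODE, $\frac{d}{dt}\|R\|_{1,p}\le C(1+\|R\|_\infty^2)\|R\|_{1,p}$ — this prevents blow-up of $\|R(t)\|_{1,p}$ in finite time, so the maximal solution is global.

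The main obstacle I expect is the global-in-time step rather than local existence: one must produce an honest a priori bound on $\|R(t)\|_{\infty}$ valid for all finite $t$, since the cubic nonlinearity is only locally Lipschitz and the focusing sign rules out a naive energy argument. The cleanest route is the conserved product $\prod_{n}(1+|R_n(t)|^2)$ coming from the Ablowitz–Ladik conservation laws (see \cite{AblowitzLadikJMP76}, \cite[Chap. 3]{APT}): since each factor is $\ge1$, any single factor is bounded by the (constant) full product, giving $\sup_n|R_n(t)|^2\le\prod_n(1+|R_n(0)|^2)-1<\infty$ uniformly in $t$; the finiteness of the initial product follows from \eqref{eq:l1} with $p=0$. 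With this uniform $\ell^\infty$ bound in hand the Gronwall estimate on $\|R(t)\|_{1,p}$ closes the argument and uniqueness is already built into the Picard construction. One should also note that preservation of the decay weight $(1+|n|)^p$ is automatic here because the lattice Laplacian and the shifts map $\ell^{1,p}$ boundedly to itself; no separate propagation-of-regularity argument is needed, in contrast to the continuous NLS.
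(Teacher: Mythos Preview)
Your proof is correct and follows essentially the same strategy as the paper: local well-posedness via Picard--Lindel\"of, the conserved product $\prod_n(1+|R_n|^2)$ to get a uniform $\ell^\infty$ bound, and then Gronwall on $\|R(t)\|_{1,p}$ to rule out finite-time blow-up. The only organizational difference is that the paper first runs the local existence and continuation argument in $\ell^\infty$ (where the conserved quantity lives naturally) and afterwards upgrades to $\ell^{1,p}$ via Gronwall, whereas you work directly in $\ell^{1,p}$ throughout; the ingredients are identical.
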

\begin{proof}
We can regard \eqref{eq:IDNLS} as an ODE in the Banach space $\ell^{1,p}\subset\ell^{\infty}$.
First we solve it in $\ell^{\infty}$. Set $c_{-\infty}=\prod_{n=-\infty}^{\infty}(1+|R_{n}|^{2})\ge1$,
$\rho=(c_{-\infty}-1)^{1/2}$. Since $1+|R_{n}(0)|^{2}\le c_{-\infty}$
for each $n$, we have $\|R(0)\|_{\infty}\le\rho$. Set \ensuremath{B:=\{R=\{R_{n}\}\in\ell^{\infty};\|R-R(0)\|_{\infty}\le\rho\}.}
 Since the right-hand side is Lipschitz continuous and bounded if
$R=\{R_{n}\}\in B$, \eqref{eq:IDNLS} can be solved in $B$ locally
in time, say up to $t=t_{1}=t_{1}(\rho)$. By a standard argument
about ODEs in a Banach space, $t_{1}$ is determined by $\rho$ only.
Since $c_{-\infty}$ and $\rho$ are conserved quantities, we have
\ensuremath{\|R(t)\|_{\infty}\le\rho}
 for $0\le t<t_{1}$. Then we solve \eqref{eq:IDNLS} again with the
initial value at $t=t_{1}/2$. The solution can be extended up to
$t=3t_{1}/2$. We repeat this process  to extend the solution $\{R_{n}(t)\}\in\ell^{\infty}$
indefinitely and it satisfies \ensuremath{\|R(t)\|_{\infty}\le\rho}
 for $0\le t<\infty$. We obtain $\|\frac{d}{dt}R(t)\|_{1,p}\le\textrm{const.}\|R(t)\|_{1,p}$.
By integration, we get \ensuremath{\|R(t)\|_{1,p}\le\|R(0)\|_{1,p}+\textrm{const.}\int_{0}^{t}\|R(\tau)\|_{1,p}\,d\tau.}
 By virtue of the Gronwall inequality, $\|R(t)\|_{1,p}$ never blows
up in a finite time. 
\end{proof}
\begin{rem}
We do not need a smallness condition like \cite[(5)]{IDNLS} in Proposition~\ref{prop:ODE}. 
\end{rem}
Next we explain a concrete representation formula of the solution
based on the inverse scattering transform. Let us introduce the associated
Ablowitz-Ladik scattering problem
\begin{equation}
X_{n+1}=\mathcal{M}_{n}X_{n},\;\mathcal{M}_{n}=\begin{bmatrix}z & -\bar{R}_{n}\\
R_{n} & z^{-1}
\end{bmatrix},\label{eq:Ablowitz-Ladik}
\end{equation}
where the bar denotes the complex-conjugate\footnote{We quote many formulas from \cite{APT}, in which the complex conjugate
is denoted by $*$. On the other hand, throughout the present paper,
the complex conjugate is denoted by a bar. The $*$'s in $\phi_{n}^{*}(z,t),a^{*}(z)$
etc. are used only for the purpose of distinguishing them from $\phi_{n}(z,t),a(z)$
etc. }. The $t$-part is 
\begin{equation}
\frac{\,d}{\,dt}X_{n}=\begin{bmatrix}-iR_{n-1}\bar{R}_{n}-\frac{i}{2}(z-z^{-1})^{2} & i(z\bar{R}_{n}-z^{-1}\bar{R}_{n-1})\\
i(z^{-1}R_{n}-zR_{n-1}) & iR_{n}\bar{R}_{n-1}+\frac{i}{2}(z-z^{-1})^{2}
\end{bmatrix}X_{n}\label{eq:t-part}
\end{equation}
 and \eqref{eq:IDNLS} is equivalent to the compatibility condition
\ensuremath{\frac{\,d}{\,dt}X_{n+1}=(\frac{\,d}{\,dt}X_{m})_{m=n+1}.}

The condition \eqref{eq:l1} is preserved for $t<\infty$. We can
construct eigenfunctions satisfying \eqref{eq:Ablowitz-Ladik} for
any fixed $t$ (\cite[pp.49-56]{APT}). More specifically, one can
define the eigenfunctions (depending on $t$) \ensuremath{\phi_{n}(z,t),\psi_{n}(z,t)\in\mathcal{O}(|z|>1)\cap\mathcal{C}^{0}(|z|\ge1)}
 and \ensuremath{\psi_{n}^{*}(z,t),\phi_{n}^{*}(z,t)\in\mathcal{O}(|z|<1)\cap\mathcal{C}^{0}(|z|\le1)}
 such that 
\begin{align*}
 & \phi_{n}(z,t)\sim z^{n}\begin{bmatrix}1\\
0
\end{bmatrix},\hspace{-5em} & \phi_{n}^{*}(z,t)\sim z^{-n}\begin{bmatrix}0\\
1
\end{bmatrix} & \quad\mbox{ as }n\to-\infty,\\
 & \psi_{n}(z,t)\sim z^{-n}\begin{bmatrix}0\\
1
\end{bmatrix},\hspace{-5em} & \psi_{n}^{*}(z,t)\sim z^{n}\begin{bmatrix}1\\
0
\end{bmatrix} & \quad\mbox{ as }n\to\infty.
\end{align*}
On the circle $C\colon|z|=1$, there exist unique functions $a(z),\,a^{*}(z)$,
$b(z)=b(z,t)$, $b^{*}(z)=b^{*}(z,t)$ for which 
\begin{align*}
 & \phi_{n}(z,t)=b(z,t)\psi_{n}(z,t)+a(z)\psi_{n}^{*}(z,t),\\
 & \phi_{n}^{*}(z,t)=a^{*}(z)\psi_{n}(z,t)+b^{*}(z,t)\psi_{n}^{*}(z,t)
\end{align*}
holds. It is known that that $a(z)$ and $a^{*}(z)$ are independent
of $t$. They can be represented as Wronskians of the eigenfunctions
and it can be shown that 
\begin{align*}
a(z)\in\mathcal{O}(|z|>1)\cap\mathcal{C}^{0}(|z|\ge1),\hspace{2em} & a^{*}(z)\in\mathcal{O}(|z|<1)\cap\mathcal{C}^{0}(|z|\le1),\\
a^{*}(z)=\bar{a}(1/\bar{z})\;(0<|z|\le1),\hspace{2em} & b^{*}(z)=-\bar{b}(1/\bar{z})\;(|z|=1).
\end{align*}
Moreover, we have $a(z)\to1(z\to\infty)$ and $a^{*}(z)\to1(z\to0)$.

\textit{We assume that $a(z)$ and $a^{*}(z)$ never vanish on the
unit circle. } Their zeros in $|z|>1$ and $|z|<1$ are called \textit{eigenvalues}.
The numbers and the locations of eigenvalues are time-independent.
\textit{We assume that the eigenvalues are all simple}. If $a(z_{j})=0$
and $a^{*}(z_{\ell}^{*})=0$, then we have 
\[
\phi_{n}(z_{j})=b_{j}\psi_{n}(z_{j}),\;\phi_{n}^{*}(z_{\ell}^{*})=b_{\ell}^{*}\psi_{n}^{*}(z_{\ell}^{*})
\]
for some complex constants $b_{j}$ and $b_{\ell}^{*}$. We set 
\[
C_{j}=C_{j}(t)=\frac{b_{j}}{\frac{d}{dz}a(z_{j})},\;C_{\ell}^{*}=C_{\ell}^{*}(t)=\frac{b_{\ell}^{*}}{\frac{d}{dz}a^{*}(z_{\ell}^{*})}
\]
and refer to them as \textit{the norming constants} associated with
the eigenvalues $z_{j}$ and $z_{\ell}^{*}$ respectively.

The following proposition can be found in \cite[p.67]{APT}.
\begin{prop}
\label{prop:symmetry}The eigenvalues come in quartets 
\[
\{\pm z_{j},\pm\bar{z}_{j}^{-1}\}_{j=1}^{J},
\]
where $|z_{j}|>1$. The norming constant associated with $-z_{j}$
(resp. $-z_{j}^{*}=-\bar{z}_{j}^{-1}$) is equal to that associated
with $+z_{j}$ (resp. $+z_{j}^{*}=\bar{z}_{j}^{-1}$). Moreover we
have 
\[
C_{j}^{*}=\bar{z}_{j}^{-2}\bar{C_{j}},
\]
where $C_{j}$ (resp. $C_{j}^{*}$) is the the norming constant associated
with $\pm z_{j}$ (resp. $\pm z_{j}^{*}=\pm\bar{z}_{j}^{-1}$).

Set $\omega_{j}=(z_{j}-z_{j}^{-1})^{2}/2$, $\bar{\omega}_{j}=(\bar{z}_{j}-\bar{z}_{j}^{-1})^{2}/2$.
Then the time evolution of the norming constants is given by  
\begin{equation}
C_{j}(t)=C_{j}(0)\exp(2i\omega_{j}t),\,C_{j}^{*}(t)=C_{j}^{*}(0)\exp(-2i\bar{\omega}_{j}t).\label{eq:normingconstants}
\end{equation}
\end{prop}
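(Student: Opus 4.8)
The plan is to read off the whole statement from two elementary symmetries of the Ablowitz--Ladik scattering system \eqref{eq:Ablowitz-Ladik} together with its $t$-part \eqref{eq:t-part}; nothing is needed beyond the facts that already underlie the construction of the Jost solutions. First I would treat the reflection $z\mapsto -z$. A direct check gives $\mathcal{M}_n(-z)=-\sigma_3\mathcal{M}_n(z)\sigma_3$ with $\sigma_3=\mathrm{diag}(1,-1)$, so if $X_n$ solves \eqref{eq:Ablowitz-Ladik} with spectral parameter $z$, then $(-1)^n\sigma_3X_n$ solves it with parameter $-z$. Matching the prescribed asymptotics as $n\to\mp\infty$, and using that a Jost solution is uniquely pinned down by its asymptotics, one gets $\phi_n(-z)=(-1)^n\sigma_3\phi_n(z)$, $\psi_n(-z)=-(-1)^n\sigma_3\psi_n(z)$, $\psi_n^{*}(-z)=(-1)^n\sigma_3\psi_n^{*}(z)$; substituting these into $\phi_n=b\psi_n+a\psi_n^{*}$ gives $a(-z)=a(z)$ and $b(-z)=-b(z)$, and the same argument with $\phi_n^{*},\psi_n^{*},\psi_n$ gives $a^{*}(-z)=a^{*}(z)$, so the zeros of $a$ and of $a^{*}$ are symmetric under $z\mapsto -z$. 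Evaluating $\phi_n(-z)=(-1)^n\sigma_3\phi_n(z)$ at an eigenvalue $z_j$, where $\phi_n(z_j)=b_j\psi_n(z_j)$, gives $\phi_n(-z_j)=-b_j\psi_n(-z_j)$, so the proportionality constant at $-z_j$ is $-b_j$; since differentiating $a(-z)=a(z)$ gives $a'(-z_j)=-a'(z_j)$, the norming constants agree, $C_{-z_j}=(-b_j)/(-a'(z_j))=C_j$, and likewise $C_{-z_j^{*}}=C_{z_j^{*}}$.

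Next I would treat the inversion $z\mapsto 1/\bar z$. With $J=\begin{bmatrix}0&-1\\1&0\end{bmatrix}$ one checks $\mathcal{M}_n(z)=J\,\overline{\mathcal{M}_n(1/\bar z)}\,J^{-1}$, so $X_n\mapsto J\overline{X_n(1/\bar z)}$ carries solutions at $1/\bar z$ to solutions at $z$; matching asymptotics gives $\phi_n^{*}(z)=J\overline{\phi_n(1/\bar z)}$ and $\psi_n^{*}(z)=-J\overline{\psi_n(1/\bar z)}$ and reproduces the recorded identity $a^{*}(z)=\overline{a(1/\bar z)}$. Hence $a^{*}(\bar z_j^{-1})=\overline{a(z_j)}=0$, so an eigenvalue $z_j$ with $|z_j|>1$ forces a zero $z_j^{*}:=\bar z_j^{-1}$ of $a^{*}$ in $|z|<1$; together with the first step this shows the eigenvalues fall into quartets $\{\pm z_j,\pm\bar z_j^{-1}\}$, whose four members are distinct because $|z_j|\ne 1$. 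Evaluating the eigenfunction identities at $z=\bar z_j^{-1}$ and using $\phi_n(z_j)=b_j\psi_n(z_j)$ gives $\phi_n^{*}(\bar z_j^{-1})=-\bar b_j\,\psi_n^{*}(\bar z_j^{-1})$, i.e.\ $b_j^{*}=-\bar b_j$, while the chain rule applied to $a^{*}(z)=\overline{a(1/\bar z)}$ gives $\tfrac{d}{dz}a^{*}(\bar z_j^{-1})=-\bar z_j^{2}\,\overline{a'(z_j)}$; dividing, $C_j^{*}=(-\bar b_j)/(-\bar z_j^{2}\,\overline{a'(z_j)})=\bar z_j^{-2}\,\bar C_j$.

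For the time evolution I would use the fact that as $n\to\pm\infty$ the coefficient matrix of \eqref{eq:t-part} tends to $\mathrm{diag}\bigl(-\tfrac i2(z-z^{-1})^2,\ \tfrac i2(z-z^{-1})^2\bigr)$, so the genuine simultaneous solution of \eqref{eq:Ablowitz-Ladik} and \eqref{eq:t-part} whose $n$-asymptotics matches a given Jost solution equals that Jost solution times a scalar $e^{\mp\frac i2(z-z^{-1})^2 t}$, the sign being determined by which component is normalized. Writing $\Phi_n=e^{-\frac i2(z-z^{-1})^2 t}\phi_n$ and $\Psi_n=e^{\frac i2(z-z^{-1})^2 t}\psi_n$, at an eigenvalue $z_j$ the vectors $\Phi_n$ and $\Psi_n$ both satisfy the same linear $t$-ODE and are proportional (bound state), so their proportionality constant is independent of $t$; feeding $\phi_n(z_j)=b_j(t)\psi_n(z_j)$ in gives $b_j(t)e^{-i(z_j-z_j^{-1})^2 t}=\mathrm{const}=b_j(0)$, hence $b_j(t)=b_j(0)e^{2i\omega_j t}$, and dividing by the $t$-independent quantity $a'(z_j)$ gives $C_j(t)=C_j(0)e^{2i\omega_j t}$. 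The identical argument with $\phi_n^{*},\psi_n^{*}$ at $z_j^{*}=\bar z_j^{-1}$, where $(z_j^{*}-1/z_j^{*})^2=(\bar z_j^{-1}-\bar z_j)^2=2\bar\omega_j$, gives $C_j^{*}(t)=C_j^{*}(0)e^{-2i\bar\omega_j t}$, which is \eqref{eq:normingconstants}.

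I expect no conceptual difficulty; the one place that will need real care is the bookkeeping of signs and normalizations — the sign in $\psi_n^{*}=\pm J\overline{\psi_n(1/\bar z)}$, the orientation of $J$, the sign produced by the chain rule for $\tfrac{d}{dz}a^{*}$ through $z\mapsto 1/\bar z$, and the direction of the time factor $e^{\mp\frac i2(z-z^{-1})^2 t}$ — which must combine to give exactly $C_j^{*}=\bar z_j^{-2}\bar C_j$ and $C_j(t)=C_j(0)e^{2i\omega_j t}$ and not sign-flipped versions; one should also confirm that each Jost solution is the unique one with its stated asymptotics, so that the symmetry relations used above are equalities rather than mere proportionalities.
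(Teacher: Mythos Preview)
The paper does not actually prove this proposition: it simply records it with the remark ``The following proposition can be found in \cite[p.67]{APT}'' and moves on. So there is no proof in the paper to compare against; what you have written is a self-contained argument supplying what the paper outsources to the Ablowitz--Prinari--Trubatch monograph.

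Your approach is the standard one and is correct in outline: the two symmetries $\mathcal{M}_n(-z)=-\sigma_3\mathcal{M}_n(z)\sigma_3$ and $\mathcal{M}_n(z)=J\,\overline{\mathcal{M}_n(1/\bar z)}\,J^{-1}$ are exactly the identities used in \cite{APT} to derive the quartet structure and the relation $C_j^{*}=\bar z_j^{-2}\bar C_j$, and your derivation of the time evolution from the asymptotic diagonal form of the $t$-part is also the classical argument. Your own caveat is the right one: the only place requiring care is the sign bookkeeping (the choice of $J$ versus $-J$, the sign in $\psi_n^{*}=\pm J\overline{\psi_n(1/\bar z)}$, and the chain-rule factor in $\tfrac{d}{dz}a^{*}$), and you have flagged these correctly. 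In short, your proposal is sound and is essentially the argument the paper is citing rather than reproducing.
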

We have the characterization equation 
\[
|a(z,t)|^{2}+|b(z,t)|^{2}=c_{-\infty}(\ge1)
\]
on $|z|=1$. We can define the \textit{reflection coefficient} 
\begin{equation}
r(z,t)=\frac{b(z,t)}{a(z,t)},\;|z|=1.
\end{equation}
It has the property $r(-z,t)=-r(z,t)$.

Assume $\{R_{n}(0)\}$ is rapidly decreasing in the sense that \eqref{eq:l1}
holds ($\{R_{n}(0)\}\in\ell^{1,p}$) for any $p\in\mathbb{N}.$ Then
$\{R_{n}(t)\}$ is also rapidly decreasing for any $t$. Due to the
construction in \cite[pp.49-56]{APT}, the eigenfunctions $\phi_{n},\phi_{n}^{*},\psi_{n}$
and $\psi_{n}^{*}$ are smooth on $C\colon|z|=1$. Hence $a,b$ and
$r=r(z,t)$ are also smooth there.

The time evolution of $r(z,t)$ according to \eqref{eq:t-part} is
given by 
\begin{equation}
r(z,t)=r(z)\exp\left(it(z-z^{-1})^{2}\right)=r(z)\exp\left(it(z-\bar{z})^{2}\right),
\end{equation}
where $r(z)=r(z,0)$. Notice that $(z-\bar{z})^{2}$ is real if $|z|=1$. 

Set $c_{n}=\prod_{k=n}^{\infty}(1+|R_{k}|^{2})$. Following \cite[(3.2.94)]{APT},
we set  
\[
m(z)=m(z;n,t)=\begin{cases}
\begin{bmatrix}1 & 0\\
0 & c_{n}
\end{bmatrix}\begin{bmatrix}\dfrac{1}{a(z)}z^{-n}\phi_{n}(z,t), & z^{n}\psi_{n}(z,t)\end{bmatrix} & \text{in}\;|z|>1,\\[10pt]
\begin{bmatrix}1 & 0\\
0 & c_{n}
\end{bmatrix}\begin{bmatrix}z^{-n}\psi_{n}^{*}(z,t), & \dfrac{1}{a^{*}(z)}z^{n}\phi_{n}^{*}(z,t)\end{bmatrix} & \text{in}\;|z|<1.
\end{cases}
\]
It is meromorphic in $|z|\ne1$ with poles $\pm z_{j}$ and $\pm\bar{z}_{j}^{-1}$
and satisfies $m(z)\to I$ as $z\to\infty$. In terms of $m(z)$,
the pole conditions \cite[(3.2.93)]{APT} are, in view of \cite[(3.2.87)]{APT},
\begin{align}
\mathrm{Res}(m(z);\pm z_{j}) & =\lim_{z\to\pm z_{j}}m(z)\begin{bmatrix}0 & 0\\
z_{j}^{-2n}C_{j}(t) & 0
\end{bmatrix},\label{eq:polecondition1}\\
\mathrm{Res}(m(z);\pm\bar{z}_{j}^{-1}) & =\lim_{z\to\pm\bar{z}_{j}^{-1}}m(z)\begin{bmatrix}0 & \bar{z}_{j}^{-2n-2}\bar{C}_{j}(t)\\
0 & 0
\end{bmatrix}\label{eq:polecondition2}
\end{align}
for $j=1,2,\dots,J$. The jump condition is given by 
\begin{align}
 & m_{+}(z)=m_{-}(z)v(z)\;\mbox{\,\ on\,}\;C\colon|z|=1,\label{eq:originalRHP1}\\
 & v(z)=v(z,t)=\begin{bmatrix}1+|r(z,t)|^{2}\; & \;z^{2n}\bar{r}(z,t)\\
z^{-2n}r(z,t)\; & \;1
\end{bmatrix}\nonumber \\
 & \hspace{1.7em}=e^{-(it/2)(z-z^{-1})^{2}\mathrm{ad\,}\sigma_{3}}\begin{bmatrix}1+|r(z)|^{2}\; & \;z^{2n}\bar{r}(z)\\
z^{-2n}r(z)\; & \;1
\end{bmatrix},\label{eq:originalRHP2}\\
 & m(z)\to I\;\mbox{\,\ as\,}\;z\to\infty.\label{eq:originalRHP3}
\end{align}
Here $m_{+}$ and $m_{-}$ are the boundary values from the \textit{outside}
and \textit{inside} of $C$ respectively ($C$ is oriented clockwise
following the convention in \cite{APT}.) We employ the usual notation
$\sigma_{3}=\mathrm{diag\,}(1,-1)$, $a^{\mathrm{ad\,}\sigma_{3}}Q=a^{\sigma_{3}}Qa^{-\sigma_{3}}$.
\begin{rem}
The jump matrix $v(z)$ in \eqref{eq:originalRHP2} is different from
that of \cite{IDNLS} in that $\bar{r}(z)$ is replaced with $-\bar{r}(z).$
Hence $|r(z)|^{2}=\bar{r}(z)r(z)$ is replaced with $-|r(z)|^{2}$.
Other quantities should be modified accordingly.
\end{rem}
The solution $\{R_{n}\}=\{R_{n}(t)\}$ to \eqref{eq:IDNLS} can be
obtained from the $(2,1)$-component of $m(z)$ by the reconstruction
formula (\cite[(3.2.91c)]{APT}) 
\begin{equation}
R_{n}(t)=-\left.\frac{\,d}{\,dz}m(z)_{21}\right|_{z=0}.\label{eq:Rnreconstruction}
\end{equation}

The following proposition can be found in \cite[p.83]{APT}. 
\begin{prop}
\label{prop:m_0}Assume $r(z)\equiv0$ (the potential is reflectionless),
$J=1$ (hence $j=1$) and let $z_{1}=\exp(\alpha_{1}+i\beta_{1}),\alpha_{1}>0$,
be one of the quartet of eigenvalues. Then the RHP \eqref{eq:polecondition1}-\eqref{eq:originalRHP3}
has a unique solution. We denote it by $m_{0}(z)$. The solution $R_{n}(t)$
to \eqref{eq:IDNLS} obtained from $m_{0}(z)$ through \eqref{eq:Rnreconstruction}
is the bright 1-soliton solution $R_{n}(t)=\mathrm{BS}(n,t;z_{1},C_{1}(0))$,
where 
\begin{eqnarray}
\mathrm{BS}(n,t;z_{1},C_{1}(0)) & = & \frac{C_{1}(0)}{|C_{1}(0)|}\exp\bigl(-i[2\beta_{1}(n+1)-2wt]\bigr)\nonumber \\
 &  & \times\sinh(2\alpha_{1})\mathrm{sech}[2\alpha_{1}(n+1)-2vt-\theta].\label{eq:one-soliton}
\end{eqnarray}
Here 
\begin{align*}
 & v=-\sinh(2\alpha_{1})\sin(2\beta_{1}),\quad w=\cosh(2\alpha_{1})\cos(2\beta_{1})-1,\\
 & \theta=\log|C_{1}(0)|-\log\sinh(2\alpha_{1}).
\end{align*}
\end{prop}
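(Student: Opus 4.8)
The plan is to exploit the degeneration of the Riemann--Hilbert problem in the reflectionless case into a finite linear-algebra problem, solve that problem in closed form, and then read off $R_n(t)$ from the reconstruction formula \eqref{eq:Rnreconstruction}. First I would observe that, with $r(z)\equiv0$, the jump matrix in \eqref{eq:originalRHP2} is the identity, so $m(z)$ has no jump across $|z|=1$ and extends to a function meromorphic on the whole Riemann sphere whose only singularities are the four simple poles $\pm z_1,\pm\bar{z}_1^{-1}$ prescribed by \eqref{eq:polecondition1}--\eqref{eq:polecondition2}, and which tends to $I$ at $\infty$; hence $m$ is rational. By \eqref{eq:polecondition1} the first column of $m$ has poles only at $\pm z_1$, by \eqref{eq:polecondition2} the second column only at $\pm\bar{z}_1^{-1}$, and each residue is a rank-one matrix built from the value of the opposite column at that pole.

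Next I would write the partial-fraction ansatz in which the unknowns are precisely those column values at the poles. Using that the norming constant at $-z_1$ equals the one at $z_1$ (Proposition~\ref{prop:symmetry}), the pole data are invariant under $z\mapsto-z$, so it suffices to look for $m$ with $m(-z)=\sigma_3m(z)\sigma_3$; this makes the diagonal entries rational in $z^2$ and the off-diagonal ones $z$ times a rational function of $z^2$, and leaves the four unknowns $m_{11}(\bar{z}_1^{-1}),m_{12}(z_1),m_{21}(\bar{z}_1^{-1}),m_{22}(z_1)$. The entries relevant to reconstruction then take the form
\[
m_{21}(z)=\frac{2z\,z_1^{-2n}C_1(t)\,m_{22}(z_1)}{z^2-z_1^2},\qquad m_{22}(z)=1+\frac{2\bar{z}_1^{-1}\,\bar{z}_1^{-2n-2}\bar{C}_1(t)\,m_{21}(\bar{z}_1^{-1})}{z^2-\bar{z}_1^{-2}},
\]
with analogous expressions for $m_{11},m_{12}$. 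Evaluating the ansatz at $z=z_1$ and at $z=\bar{z}_1^{-1}$ yields a closed linear system that decouples into two $2\times2$ blocks; solving the block in $m_{21}(\bar{z}_1^{-1}),m_{22}(z_1)$ and using $z_1=\exp(\alpha_1+i\beta_1)$, so that $z_1^2-\bar{z}_1^{-2}=2e^{2i\beta_1}\sinh(2\alpha_1)$, gives $m_{22}(z_1)=\bigl(1+e^{-4\alpha_1(n+1)}|C_1(t)|^2/\sinh^2(2\alpha_1)\bigr)^{-1}$. Since that denominator is a strictly positive real number for all $n$ and $t$, both blocks are uniquely solvable, and this is exactly the existence and uniqueness of $m_0:=m$.

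Finally I would substitute $m_{21}$ into \eqref{eq:Rnreconstruction}: differentiating the displayed expression for $m_{21}$ at $z=0$ gives $R_n(t)=2z_1^{-2n-2}C_1(t)\,m_{22}(z_1)$. To match this with \eqref{eq:one-soliton} I would use the time evolution \eqref{eq:normingconstants} together with $\omega_1=(z_1-z_1^{-1})^2/2=2\sinh^2(\alpha_1+i\beta_1)=\cosh(2\alpha_1+2i\beta_1)-1$, whence $\mathrm{Re}\,\omega_1=\cosh(2\alpha_1)\cos(2\beta_1)-1=w$ and $\mathrm{Im}\,\omega_1=\sinh(2\alpha_1)\sin(2\beta_1)=-v$, so that $|C_1(t)|=|C_1(0)|e^{2vt}$ and $\arg C_1(t)=\arg C_1(0)+2wt$. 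Setting $X=e^{-2\alpha_1(n+1)}|C_1(0)|e^{2vt}/\sinh(2\alpha_1)$ one obtains $R_n(t)=\frac{C_1(0)}{|C_1(0)|}e^{-i[2\beta_1(n+1)-2wt]}\sinh(2\alpha_1)\,\frac{2X}{1+X^2}$, and since $\frac{2X}{1+X^2}=\mathrm{sech}(\log X)=\mathrm{sech}[2\alpha_1(n+1)-2vt-\theta]$ with $\theta=\log|C_1(0)|-\log\sinh(2\alpha_1)$, this is precisely \eqref{eq:one-soliton}.

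The reduction to a rational $m$, the bookkeeping of the residue conditions, and the solution of the two $2\times2$ blocks are all routine. The hard part --- really the only subtle step --- is the very last one: one must separate carefully the real and imaginary parts of $2i\omega_1t$ (with $\omega_1$ genuinely complex) and of the factor $z_1^{-2n-2}$, and then recognise the rational expression $\tfrac{2X}{1+X^2}$ as a hyperbolic secant carrying exactly the velocities $v,w$ and the phase $\theta$ in the statement. This is the computation carried out in \cite[p.~83]{APT}, which I would reproduce.
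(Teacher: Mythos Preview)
Your proposal is correct and is essentially the same as the paper's proof: the paper simply cites \cite[pp.~72--76,\,(3.2.102)--(3.2.103),\,(3.3.143b)]{APT} for both the unique solvability and the explicit formula, and what you have written is precisely that computation spelled out in detail. The only cosmetic point is that your uniqueness argument, as phrased, literally shows uniqueness among matrices with the symmetry $m(-z)=\sigma_3 m(z)\sigma_3$; full uniqueness then follows from the standard Liouville argument (two solutions $m,\tilde m$ give an entire bounded $m\tilde m^{-1}\to I$), which is implicit in the APT reference.
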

\begin{proof}
The unique solvability is proved by using the argument of \cite[pp.72-76]{APT}
and \cite[(3.2.102), (3.2.103))]{APT}. The expression \eqref{eq:one-soliton}
is nothing but \cite[(3.3.143b)]{APT}. 
\end{proof}
Let us introduce 
\[
\varphi=\varphi(z)=\varphi(z;n,t)=\frac{1}{2}it(z-z^{-1})^{2}-n\log z,
\]
so that the jump matrix $v(z)$ in \eqref{eq:originalRHP2} is given
by 
\begin{equation}
v=v(z)=e^{-\varphi\,\mathrm{ad}\sigma_{3}}\begin{bmatrix}1+|r(z)|^{2} & \quad\bar{r}(z)\\
r(z) & \quad1
\end{bmatrix}.\label{eq:v}
\end{equation}
Moreover, we have $\varphi(z_{j})=i\omega_{j}t-n\log z_{j}$ and 
\begin{align}
 & z_{j}^{-2n}C_{j}(t)=C_{j}(0)\exp[2\varphi(z_{j})],\label{eq:exp2varphi}\\
 & \mathrm{Re\,}\varphi(z_{j})=\alpha_{j}t\bigl[\mathrm{tw}(z_{j})-n/t\bigr],\label{eq:tw}\\
 & \mathrm{tw}(z_{j})=-\alpha_{j}^{-1}\sinh(2\alpha_{j})\sin(2\beta_{j}),\label{eq:def_of_tw}
\end{align}
where
\[
z_{j}=\exp(\alpha_{j}+i\beta_{j}),\;\alpha_{j}>0.
\]
Notice the equivalence
\[
\mathrm{Re\,}\varphi(z_{j})>0\Leftrightarrow\mathrm{tw}(z_{j})>n/t.
\]

\begin{rem}
The bright soliton $\mathrm{BS}$ in \eqref{eq:one-soliton} is a
traveling wave with a $\mathrm{sech}$ profile with velocity $\mathrm{tw}(z_{1})$
modulated by a complex carrier wave. Notice that solitons corresponding
to different eigenvalues can have the same velocity. We need a generic
condition in order to avoid anomalies caused by this fact. Namely
\textsl{we assume that $\mathrm{tw}(z_{j})$'s  are mutually distinct.}
It is equivalent to saying that there is at most only one $j$ such
that $\mathrm{Re\,}\varphi(z_{j})=0$ when $n/t$ is fixed. \vspace{0.5em}
\end{rem}
\noindent\textbf{Assumptions (A)} \quad{}\textit{We have made the
following three generic assumptions: }
\begin{itemize}
\item \textit{$a(z)$ never vanishes on the unit circle.  It implies that
$a^{*}(z)$ never vanishes there either. }
\item \textit{The eigenvalues are all simple. }
\item \textsl{$\mathrm{tw}(z_{j})$'s }\textit{are mutually distinct. We
may assume that $\mathrm{tw}(z_{j})<\mathrm{tw}(z_{j+1})$ for any
$j$ without loss of generality. }
\end{itemize}
\textit{They are assumed throughout the present paper. See the appendix
for counter-examples showing that they are not trivial. The first
and the second are assumed in \cite{APT}.}

\vspace{0.5em}
Soliton collision and phase shift in the reflectionless case are studied
in \cite{APT} by a formal calculation. We will give a rigorous argument
based on the Riemann-Hilbert technique. It encompasses the case of
non-zero reflection. 
\begin{lem}
\label{lem:modulus1}

If $|a|=1$, we have 
\[
\mathrm{BS}(n,t;z_{j},aC_{j}(0))=a\mathrm{BS}(n,t;z_{j},C(0)).
\]
The replacement of $C_{j}(0)$ by $aC_{j}(0)$ does not change the
value of $\theta$ in \eqref{eq:one-soliton}. It causes phase shift
in the carrier wave $|C_{1}(0)|^{-1}C_{1}(0)\exp\bigl(-i[2\beta_{1}(n+1)-2wt]\bigr)$
only. In other words, the right-hand side remains a 1-soliton.
\end{lem}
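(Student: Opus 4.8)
The plan is to substitute $aC_{j}(0)$ for $C_{j}(0)$ directly into the explicit formula \eqref{eq:one-soliton} and to track how each factor responds. First I would note that the norming constant enters \eqref{eq:one-soliton} in only two places: through the unimodular prefactor $C_{j}(0)/|C_{j}(0)|$, and through $\theta=\log|C_{j}(0)|-\log\sinh(2\alpha_{j})$ appearing inside the $\mathrm{sech}$; every other ingredient ($v$, $w$, $\alpha_{j}$, $\beta_{j}$, hence the velocity of the envelope and the frequency of the carrier) depends on $z_{j}$ alone.

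Next I would perform the two substitutions. Since $|a|=1$, we have $|aC_{j}(0)|=|a|\,|C_{j}(0)|=|C_{j}(0)|$, so the prefactor transforms as $aC_{j}(0)/|aC_{j}(0)|=a\cdot C_{j}(0)/|C_{j}(0)|$, i.e.\ it simply acquires the factor $a$. For the envelope's phase, $\log|aC_{j}(0)|=\log|C_{j}(0)|+\log|a|=\log|C_{j}(0)|$, so $\theta$ is unchanged; consequently the argument $2\alpha_{j}(n+1)-2vt-\theta$ of the $\mathrm{sech}$, and hence the full envelope $\sinh(2\alpha_{j})\,\mathrm{sech}[\,\cdots\,]$, is identical for $aC_{j}(0)$ and for $C_{j}(0)$. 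Combining the two computations yields $\mathrm{BS}(n,t;z_{j},aC_{j}(0))=a\,\mathrm{BS}(n,t;z_{j},C_{j}(0))$.

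Finally, writing $a=e^{i\gamma}$ with $\gamma\in\mathbb{R}$, I would observe that the right-hand side is again of the form \eqref{eq:one-soliton}: the sole effect is to multiply the carrier wave $|C_{j}(0)|^{-1}C_{j}(0)\exp(-i[2\beta_{j}(n+1)-2wt])$ by the constant $e^{i\gamma}$, a rigid phase shift, while the $\mathrm{sech}$ profile and its center remain untouched; in particular the result is a $1$-soliton of the same shape, velocity and location. This argument is entirely computational, so I do not expect any genuine obstacle — the only point worth stressing is that $\theta$ is built from the \emph{modulus} of the norming constant, which is exactly why a unimodular rescaling leaves the envelope invariant and acts only on the phase.
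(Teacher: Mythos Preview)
Your argument is correct: the paper states this lemma without proof, treating it as immediate from the explicit formula \eqref{eq:one-soliton}, and your direct substitution is exactly the verification one would supply. The key observation you isolate---that the norming constant enters only through its modulus in $\theta$ and through the unimodular ratio $C_{j}(0)/|C_{j}(0)|$ in the carrier---is precisely what makes the lemma trivial, so there is nothing to add.
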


\section{Reduction}

Let $d\negmedspace>\negmedspace0$ be sufficiently small so that the
intervals $[\mathrm{tw}(z_{j})-d,\mathrm{tw}(z_{j})+d],1\le j\le J$,
are mutually disjoint. In other words, the minimum of $|\mathrm{tw}(z_{j})-\mathrm{tw}(z_{k})|\,(j\ne k)$
exceeds $2d$. For each $(n,t)$, there is at most one index $j$
such that $-d\leq\mathrm{tw}(z_{j})-n/t\leq d$. 

For any complex number $a$ and any positive number $\varepsilon$,
let $C(a,\varepsilon)$ and $D(a,\varepsilon)$ be the circle $|z-a|=\varepsilon$
(oriented counterclockwise) and the open disk $|z-a|<\varepsilon$
respectively. 
\begin{prop}
\textup{{[}removal of poles{]} }\label{prop:blowup} Suppose that
$m(z)$ is the solution to the RHP \eqref{eq:polecondition1}-\eqref{eq:originalRHP2}.
For any subset $\sigma$ of $\left\{ 1,2,\dots,J\right\} $, let $\hat{m}(z)$
be defined by 
\[
\hat{m}(z)=\begin{cases}
m(z)\begin{bmatrix}1 & 0\\
-\dfrac{z_{j}^{-2n}C_{j}(t)}{z\mp z_{j}} & 1
\end{bmatrix} & \text{in}\;D(\pm z_{j},\varepsilon),\\[20pt]
m(z)\begin{bmatrix}1 & -\dfrac{\bar{z}_{j}^{-2n-2}\bar{C}_{j}(t)}{z\mp\bar{z}_{j}^{-1}}\\
0 & 1
\end{bmatrix} & \text{in}\;D(\pm\bar{z}{}_{j}^{-1},\varepsilon)
\end{cases}
\]
for each $j\in\sigma$. Here $\varepsilon$ is a sufficiently small
positive constant. Set $\hat{m}(z)=m(z)$ elsewhere. Then $\hat{m}(z)$
is holomorphic near $z=\pm z_{j},\pm\bar{z}_{j}^{-1}$ for $j\in\sigma$.
Instead, it has jumps along the small circles $C(\pm z_{j},\varepsilon)$
and $C(\pm\bar{z}{}_{j}^{-1},\varepsilon).$ Indeed, $\hat{m}(z)$
is the unique solution to 
\begin{align}
\hat{m}_{+}(z) & =\hat{m}_{-}(z)\begin{bmatrix}1 & 0\\
-\dfrac{z_{j}^{-2n}C_{j}(t)}{z\mp z_{j}} & 1
\end{bmatrix} & \text{on}\;C(\pm z_{j},\varepsilon),\label{eq:polecondition1'}\\
\hat{m}_{+}(z) & =\hat{m}_{-}(z)\begin{bmatrix}1 & -\dfrac{\bar{z}_{j}^{-2n-2}\bar{C}_{j}(t)}{z\mp\bar{z}_{j}^{-1}}\\
0 & 1
\end{bmatrix} & \text{on}\;C(\pm\bar{z}{}_{j}^{-1},\varepsilon)\label{eq:polecondition2'}
\end{align}
for $j\in\sigma$ and \eqref{eq:polecondition1}-\eqref{eq:polecondition2}
for $j\not\in\sigma$ with \eqref{eq:originalRHP1}-\eqref{eq:originalRHP3}. 
\end{prop}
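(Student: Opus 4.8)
The plan is to verify, by a single Laurent computation at each eigenvalue, that right-multiplication by the prescribed triangular matrix removes the pole there, then to read off the jumps that thereby appear on the small circles, and finally to obtain uniqueness from the observation that $m\mapsto\hat m$ is an explicit reversible transformation. I would fix $j\in\sigma$ and treat $z=z_j$; the three companions $-z_j$ and $\pm\bar z_j^{-1}$ are identical up to notation, using that the norming constant at $-z_j$ equals $C_j(t)$ by Proposition \ref{prop:symmetry} and that at $\pm\bar z_j^{-1}$ the strictly lower triangular nilpotent matrix below is replaced by a strictly upper triangular one, so \eqref{eq:polecondition2} and \eqref{eq:polecondition2'} are treated exactly like \eqref{eq:polecondition1} and \eqref{eq:polecondition1'}. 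Put $N_j=\begin{bmatrix}0&0\\ z_j^{-2n}C_j(t)&0\end{bmatrix}$, so that \eqref{eq:polecondition1} reads $\mathrm{Res}(m;z_j)=\lim_{z\to z_j}m(z)N_j$ and the inserted factor is $P(z)=I-(z-z_j)^{-1}N_j$, with $P(z)^{-1}=I+(z-z_j)^{-1}N_j$ since $N_j^2=0$. Writing $m(z)=(z-z_j)^{-1}A+B+O(z-z_j)$ near $z_j$, the mere existence of the limit $\lim_{z\to z_j}m(z)N_j$ forces $AN_j=0$, and then \eqref{eq:polecondition1} says $A=BN_j$; hence
\[ m(z)P(z)=\Bigl(\tfrac{A}{z-z_j}+B+O(z-z_j)\Bigr)\Bigl(I-\tfrac{N_j}{z-z_j}\Bigr)=-\tfrac{AN_j}{(z-z_j)^2}+\tfrac{A-BN_j}{z-z_j}+O(1)=O(1), \]
so $\hat m=mP$ is bounded, hence holomorphic, at $z_j$. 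This simultaneous cancellation of the double pole ($AN_j=0$) and the simple pole ($A=BN_j$) is the heart of the proof; the rest is bookkeeping.

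Next I would check that $\hat m$ solves the asserted problem. For $\varepsilon$ small the disks $D(\pm z_j,\varepsilon),D(\pm\bar z_j^{-1},\varepsilon)$ with $j\in\sigma$ are pairwise disjoint and disjoint from $C$ (no eigenvalue lies on $C$, since $|z_j|>1$), and $m$ is holomorphic across each small circle; therefore on $C(\pm z_j,\varepsilon)$, oriented counterclockwise, the interior boundary value is $\hat m_+=mP$ and the exterior one is $\hat m_-=m$, which gives $\hat m_+=\hat m_-P$, i.e.\ \eqref{eq:polecondition1'}, and likewise \eqref{eq:polecondition2'} on $C(\pm\bar z_j^{-1},\varepsilon)$. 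Off all the small circles and off $C$ we have $\hat m=m$, so $\hat m$ is holomorphic except at $\pm z_k,\pm\bar z_k^{-1}$ with $k\notin\sigma$, where \eqref{eq:polecondition1}--\eqref{eq:polecondition2} still hold; on $C$ the jump \eqref{eq:originalRHP1}--\eqref{eq:originalRHP2} is unchanged because the disks avoid $C$; and $\hat m\to I$ as $z\to\infty$, since the disks are bounded. Thus $\hat m$ is a solution of the modified RHP.

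Finally, for uniqueness I would note that the construction is reversible: given any solution $\tilde{\hat m}$ of the modified RHP, set $\tilde m=\tilde{\hat m}\,P(z)^{-1}$ in the disks (with the analogous inverses at $-z_j$ and $\pm\bar z_j^{-1}$) and $\tilde m=\tilde{\hat m}$ elsewhere; the circle jumps \eqref{eq:polecondition1'}--\eqref{eq:polecondition2'} make $\tilde m$ continuous, hence holomorphic, across each small circle, while near $z_j$ the identity $\tilde m=\tilde{\hat m}+(z-z_j)^{-1}\tilde{\hat m}N_j$ together with $N_j^2=0$ shows $\tilde m$ has a simple pole obeying precisely \eqref{eq:polecondition1}. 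Hence $m\mapsto\hat m$ is a bijection from the solution set of \eqref{eq:polecondition1}--\eqref{eq:originalRHP3} onto that of the modified RHP, and since the former is uniquely solvable (the solution being supplied by the inverse scattering transform, with uniqueness by the standard Liouville-type argument), $\hat m$ is the unique solution of the latter. I expect no genuine obstacle: the only things requiring care are the orientation conventions on the small circles and keeping straight the four upper/lower and $\pm$ variants of the one-line Laurent computation, the cancellation in that computation being the one substantive point.
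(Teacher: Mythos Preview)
Your proof is correct and follows the same strategy as the paper: both convert the residue conditions at $\pm z_j,\pm\bar z_j^{-1}$ into jump conditions on small circles via right-multiplication by the triangular matrices $P(z)$, and both observe that this is a reversible transformation, so the two RHPs are equivalent. Your Laurent computation showing $AN_j=0$ and $A=BN_j$ force $mP$ to be holomorphic is exactly the content of the paper's phrase ``it is easy to see that RHP($\sigma$) is equivalent to the original problem RHP($\emptyset$)''; you have simply written it out.

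The one point of genuine difference is the uniqueness argument. The paper does not invoke uniqueness of the \emph{original} RHP with poles; instead it passes to the extreme case $\sigma=\{1,\dots,J\}$, which has no residue conditions at all, and then cites the standard textbook result (Deift, Theorem~7.18) for $2\times2$ jump-only RHPs with unit-determinant jump matrices. Your route---reducing to uniqueness of RHP($\emptyset$) and appealing to ``the standard Liouville-type argument''---is also valid, but the Liouville argument for an RHP carrying residue conditions is not entirely off-the-shelf: one must check that $\det m$ stays holomorphic through the poles (which, in fact, your own computation $\hat m=mP$, $\det P=1$, already implies). The paper's choice sidesteps that check by landing on a pole-free problem before invoking uniqueness; since your bijectivity argument works for every $\sigma$, you could do the same at no extra cost.
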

\begin{proof}
Let RHP($\sigma$) be the new problem. It is easy to see that RHP($\sigma$)
is equivalent to the original problem RHP($\emptyset$) for any $\sigma$.
The uniqueness for RHP($\left\{ 1,2\dots,J\right\} $) follows from
\cite[Theorem 7.18]{Deiftbook}. The point is that we are dealing
with $2\times2$ jump matrices whose determinants are equal to 1. 
\end{proof}

\begin{lem}
\label{lem:bilinear}Set $\mathrm{conj}(z)=\bar{z}$ for any complex
number $z$. Then we have
\[
\frac{1}{z_{0}}\frac{\bar{p}^{-1}-z_{0}}{\bar{p}^{-1}-\bar{z_{0}}^{-1}}=\mathrm{conj}\!\left(z_{0}\frac{p-\bar{z_{0}}^{-1}}{p-z_{0}}\right)
\]
for any $p,z_{0}\in\mathbb{C}\setminus\left\{ 0\right\} $. In other
words, for $f(p)=z_{0}(p-\bar{z}_{0}^{-1})/(p-z_{0})$, $f(\bar{p}^{-1})$
is the reciprocal of the complex conjugate of $f(p)$. Moreover we
have $f(p)=(p-\alpha)/(\bar{\alpha}p-1)$ for $\alpha=\bar{z}_{0}^{-1}$.
When $|z_{0}|>1$, $f(p)$ is a bilinear transformation that maps
the disk $|p|<1$ onto itself: $|f(p)|=1$ if $|p|=1$. 
\end{lem}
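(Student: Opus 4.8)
The plan is to reduce everything to a single M\"obius transformation written in a normalized form and then carefully track complex conjugates. First I would verify the normalized form claimed in the statement: with $\alpha=\bar z_0^{-1}$, so that $\bar\alpha=z_0^{-1}$ and $z_0^{-1}p-1=(p-z_0)/z_0$, one clears denominators to get
\[
z_0\frac{p-\bar z_0^{-1}}{p-z_0}=\frac{p-\alpha}{\bar\alpha p-1}.
\]
Call this rational function $f(p)$; it is the same $f$ named in the lemma. The virtue of the $\alpha$-form is that conjugation acts transparently on the coefficients, giving $\overline{f(p)}=(\bar p-\bar\alpha)/(\alpha\bar p-1)$.

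Next I would compute $f(\bar p^{-1})$ by direct substitution, then multiply numerator and denominator by $\bar p$:
\[
f(\bar p^{-1})=\frac{\bar p^{-1}-\alpha}{\bar\alpha\bar p^{-1}-1}=\frac{1-\alpha\bar p}{\bar\alpha-\bar p}=\frac{\alpha\bar p-1}{\bar p-\bar\alpha}.
\]
Comparing with the expression for $\overline{f(p)}$ above, the two are reciprocals of one another, so $f(\bar p^{-1})=1/\overline{f(p)}=\overline{f(p)^{-1}}$. Since $1/f(\bar p^{-1})$ written back in terms of $z_0$ is exactly the left-hand side $\tfrac{1}{z_0}(\bar p^{-1}-z_0)/(\bar p^{-1}-\bar z_0^{-1})$ of the asserted identity, while $\overline{f(p)}$ is its right-hand side, the identity follows. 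This middle step is pure bookkeeping of bars and of which factor is the reciprocal of which, which is the only place where a sign or inversion error could creep in, so I would carry it out slowly rather than quickly.

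For the final assertion, assume $|z_0|>1$, so $|\alpha|=|z_0|^{-1}<1$; then $\alpha\neq z_0$, so $f$ is a genuine (non-degenerate) M\"obius transformation, and its pole $p=z_0$ together with its value at infinity (also $z_0$) lie outside the closed unit disk. If $|p|=1$ then $\bar p=p^{-1}$, hence $\bar p^{-1}=p$, and the identity just proved gives $f(p)=1/\overline{f(p)}$, i.e.\ $|f(p)|=1$; thus $f$ maps the unit circle into the unit circle. Being a M\"obius map it therefore sends the open disk onto either the open disk or its exterior, and evaluating $f(0)=z_0\cdot(-\bar z_0^{-1})/(-z_0)=\bar z_0^{-1}=\alpha$, which lies in the open disk, shows it is the former. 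I do not expect any genuine obstacle: the whole lemma is elementary, and the only real pitfall is mismanaging conjugates in the computation of $f(\bar p^{-1})$ versus $\overline{f(p)}$.
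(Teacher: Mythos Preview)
Your proof is correct and complete. The paper itself states this lemma without proof (it is elementary), so there is no approach to compare against; your argument via the normalized form $f(p)=(p-\alpha)/(\bar\alpha p-1)$ is clean and handles all the claims, including the disk-preserving property by evaluating $f(0)=\alpha$.
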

\begin{prop}
\label{prop:gyakusu} Let $\Gamma$ be an oriented contour and $V(z)$
be a given $2\times2$ matrix on it. Assume $z_{0}\ne0$ and $\pm z_{0},\bar{\pm z}_{0}^{-1}\not\in\Gamma$.
For a sufficiently small constant $\varepsilon>0$, let $\Sigma(z_{0})$
be the union of $C(z_{0},\varepsilon)$, $C(-z_{0},\varepsilon)$,
$C(\bar{z}_{0}^{-1},\varepsilon)$ and $C(-\bar{z}_{0}^{-1},\varepsilon)$.
Consider the following Riemann-Hilbert problem on $\Gamma\cup\Sigma(z_{0})$:
\begin{align*}
M_{+}(z) & =M_{-}(z)V(z) & \text{on}\;\Gamma,\\
M_{+}(z) & =M_{-}(z)\begin{bmatrix}1 & 0\\
-\dfrac{A}{z\mp z_{0}} & 1
\end{bmatrix} & \text{on}\;C(\pm z_{0},\varepsilon),\\
M_{+}(z) & =M_{-}(z)\begin{bmatrix}1 & -\dfrac{\bar{z}_{0}^{-2}\bar{A}}{z\mp\bar{z}_{0}^{-1}}\\
0 & 1
\end{bmatrix} & \text{on}\;C(\pm\bar{z}_{0}^{-1},\varepsilon),\\
M(z) & \to I & \text{as}\;z\to\infty.
\end{align*}
Set 
\[
R(z,z_{0})=\frac{z_{0}^{2}(z^{2}-\bar{z}_{0}^{-2})}{z^{2}-z_{0}^{2}}.
\]
Then the RHP above is equivalent to the following one:
\begin{align*}
\tilde{M}_{+}(z) & =\tilde{M}_{-}(z)D(z)^{-1}V(z)D(z) & \text{on}\;\Gamma,\\
D(z) & =\begin{bmatrix}R(z,z_{0})^{-1} & 0\\
0 & R(z,z_{0})
\end{bmatrix} & \text{on}\;\Gamma,\\
\tilde{M}_{+}(z) & =\tilde{M}_{-}(z)\begin{bmatrix}1 & -R(z,z_{0})^{2}\dfrac{z\mp z_{0}}{A}\\
0 & 1
\end{bmatrix} & \text{on}\;C(\pm z_{0},\varepsilon),\\
\tilde{M}_{+}(z) & =\tilde{M}_{-}(z)\begin{bmatrix}1 & 0\\
-R(z,z_{0})^{-2}\dfrac{z\mp\bar{z}_{0}^{-1}}{\bar{z}_{0}^{-2}\bar{A}} & 1
\end{bmatrix} & \text{on}\;C(\pm\bar{z}_{0}^{-1},\varepsilon),\\
\tilde{M}(z) & \to I & \text{as}\;z\to\infty.
\end{align*}
 One can add pole conditions. If the original problem has pole conditions
\begin{align}
\mathrm{Res}(M(z);\pm p) & =\lim_{z\to\pm p}M(z)\begin{bmatrix}0 & 0\\
p^{-2n}C & 0
\end{bmatrix},\label{eq:polecondition1-1}\\
\mathrm{Res}(M(z);\pm\bar{p}^{-1}) & =\lim_{z\to\pm\bar{p}^{-1}}M(z)\begin{bmatrix}0 & \bar{p}^{-2n-2}\bar{C}\\
0 & 0
\end{bmatrix},\label{eq:polecondition2-1}
\end{align}
where $\pm p$ and $\pm\bar{p}^{-1}$ do not belong to the closure
of $D(z_{0},\varepsilon)\cup D(\bar{z}_{0}^{-1},\varepsilon),$ then
the revised conditions are
\begin{align}
\mathrm{Res}(\tilde{M}(z);\pm p) & =\lim_{z\to\pm p}\tilde{M}(z)\begin{bmatrix}0 & 0\\
p^{-2n}\tau C & 0
\end{bmatrix},\label{eq:polecondition1-1-1}\\
\tau & =R(\pm p,z_{0})^{-2}=\left(\frac{p^{2}-z_{0}^{2}}{z_{0}^{2}(p^{2}-\bar{z_{0}}^{-2})}\right)^{2},\\
\mathrm{Res}(\tilde{M}(z);\pm\bar{p}^{-1}) & =\lim_{z\to\pm\bar{p}^{-1}}\tilde{M}(z)\begin{bmatrix}0 & \bar{p}^{-2n-2}\bar{\tau}\bar{C}\\
0 & 0
\end{bmatrix}.\label{eq:polecondition2-1-1}
\end{align}
In other words, $\tau C$ plays the role of the norming constant in
the new problem. 
\end{prop}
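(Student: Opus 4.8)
The plan is to realize $\tilde M$ as $M$ conjugated on the right by the diagonal factor built from $R(z,z_0)$, in the spirit of the soliton conjugation of \cite{TodaRarefaction}, and then to patch it locally near the four points $\pm z_0,\pm\bar z_0^{-1}$. Setting $D(z)=R(z,z_0)^{-\sigma_3}=\mathrm{diag}(R(z,z_0)^{-1},R(z,z_0))$ (note $\det D\equiv1$), I would define, outside the four disks $D(\pm z_0,\varepsilon)$ and $D(\pm\bar z_0^{-1},\varepsilon)$,
\[
\tilde M(z)=R(\infty,z_0)^{\sigma_3}\,M(z)\,D(z),
\]
the constant diagonal prefactor $R(\infty,z_0)^{\sigma_3}=z_0^{2\sigma_3}$ serving only to enforce $\tilde M(z)\to I$ as $z\to\infty$ (indeed $D(\infty)=R(\infty,z_0)^{-\sigma_3}$). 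Since $\pm z_0,\pm\bar z_0^{-1}\notin\Gamma$, the factor $D$ is holomorphic and invertible in a neighbourhood of $\Gamma$, hence produces no jump there; the constant prefactor cancels between $\tilde M_+$ and $\tilde M_-$ and one is left with $\tilde M_+=\tilde M_-\,D(z)^{-1}V(z)D(z)$ on $\Gamma$. Likewise $\pm p,\pm\bar p^{-1}$ lie outside the closures of $D(z_0,\varepsilon)$ and $D(\bar z_0^{-1},\varepsilon)$, so $D$ is regular there and the residue conditions \eqref{eq:polecondition1-1}--\eqref{eq:polecondition2-1} are simply conjugated; the transformed data are identified in the last step.

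The heart of the proof is the local patch inside the four disks. Near $\pm z_0$ the entry $R(z,z_0)$ of $D$ has a simple pole and $R(z,z_0)^{-1}$ a simple zero, so $M(z)D(z)$ is not holomorphic there. I would look for a unimodular meromorphic local factor $G(z)$ and set $\tilde M(z)=R(\infty,z_0)^{\sigma_3}M(z)G(z)D(z)$ inside $D(\pm z_0,\varepsilon)$, constrained by the demands (i) that $G(z)D(z)$ be holomorphic at $\pm z_0$ and (ii) that the jump on $C(\pm z_0,\varepsilon)$ become the stated matrix $\begin{bmatrix}1&-R(z,z_0)^2(z\mp z_0)/A\\0&1\end{bmatrix}$. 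Since $D\begin{bmatrix}1&b\\0&1\end{bmatrix}D^{-1}=\begin{bmatrix}1&R(z,z_0)^{-2}b\\0&1\end{bmatrix}$, demand (ii), which states that $D(z)^{-1}\begin{bmatrix}1&0\\-A/(z\mp z_0)&1\end{bmatrix}G(z)D(z)$ equals that matrix, is equivalent to $\begin{bmatrix}1&0\\-A/(z\mp z_0)&1\end{bmatrix}G(z)=\begin{bmatrix}1&-(z\mp z_0)/A\\0&1\end{bmatrix}$, which forces $G(z)$ to be the unimodular matrix
\[
G(z)=\begin{bmatrix}1 & 0\\ \dfrac{A}{z\mp z_0} & 1\end{bmatrix}\begin{bmatrix}1 & -\dfrac{z\mp z_0}{A}\\ 0 & 1\end{bmatrix}=\begin{bmatrix}1 & -\dfrac{z\mp z_0}{A}\\ \dfrac{A}{z\mp z_0} & 0\end{bmatrix}.
\]
With this choice, $G(z)D(z)$ turns out to be holomorphic at $\pm z_0$: the simple zero of $R(z,z_0)^{-1}$ absorbs the pole of $A/(z\mp z_0)$ and the pole of $R(z,z_0)$ absorbs the zero of $(z\mp z_0)/A$, so (i) is automatic. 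The disks $D(\pm\bar z_0^{-1},\varepsilon)$ are handled by the mirror image of this recipe: there $R(z,z_0)^{-1}$ carries the pole, the local factor becomes $\begin{bmatrix}0 & \bar z_0^{-2}\bar A/(z\mp\bar z_0^{-1})\\ -(z\mp\bar z_0^{-1})/(\bar z_0^{-2}\bar A) & 1\end{bmatrix}$, and the circle jump flips from upper- to lower-triangular, giving exactly the advertised matrix. Checking that a single local factor simultaneously removes the singularity of $D$ and yields the prescribed, triangular-type-swapped jump is the step I expect to be the main obstacle; it works only because every matrix involved has determinant $1$, and this is where an orientation or sign error on the small circles would most easily slip in.

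Finally, for the spectator poles $\pm p,\pm\bar p^{-1}$, conjugating the residue matrices in \eqref{eq:polecondition1-1}--\eqref{eq:polecondition2-1} by the locally regular $D(z)$ scales the lower-left (resp. upper-right) norming data by $R(\pm p,z_0)^{-2}$ (resp. $R(\pm\bar p^{-1},z_0)^{2}$). Because $R(z,z_0)$ depends only on $z^2$, the two points $\pm p$ produce the common factor $\tau=R(\pm p,z_0)^{-2}$ of \eqref{eq:polecondition1-1-1}; and writing $R(z,z_0)=f(z)f(-z)$ with $f(p)=z_0(p-\bar z_0^{-1})/(p-z_0)$, Lemma \ref{lem:bilinear} gives $R(\bar p^{-1},z_0)\,\overline{R(p,z_0)}=1$, hence $R(\pm\bar p^{-1},z_0)^{2}=\overline{\tau}$, which is exactly \eqref{eq:polecondition2-1-1}. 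All factors used above — $R(\infty,z_0)^{\sigma_3}$, $D(z)$, $G(z)$ and its $\bar z_0^{-1}$-counterpart — are explicit, unimodular and invertible, so $M\mapsto\tilde M$ is a bijection between the solutions of the two problems on the common contour $\Gamma\cup\Sigma(z_0)$; in particular the transformed problem inherits unique solvability from the original one by the $2\times2$, determinant-one argument already used in the proof of Proposition \ref{prop:blowup}.
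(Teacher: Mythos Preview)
Your proof is correct and follows essentially the same route as the paper. The paper defines $\tilde{M}(z)=\mathrm{diag}(z_{0}^{2},z_{0}^{-2})\,M(z)\,D(z)$ with a piecewise $D(z)$ that, inside the small disks, is exactly your $G(z)$ times the diagonal $R(z,z_{0})^{-\sigma_{3}}$, and then verifies removability at $\pm z_{0},\pm\bar z_{0}^{-1}$ and the $\bar\tau$ identity via Lemma~\ref{lem:bilinear}; your write-up simply adds the motivation for how $G$ is found (by imposing the desired triangular jump and then checking holomorphy of $GD$), which the paper leaves implicit.
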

\begin{proof}
Set $\tilde{M}(z)=\mathrm{diag}(z_{0}^{2},z_{0}^{-2})M(z)D(z)$, where
\[
D(z)=\begin{cases}
\begin{bmatrix}1 & -\dfrac{z\mp z_{0}}{A}\\
\dfrac{A}{z\mp z_{0}} & 0
\end{bmatrix}\begin{bmatrix}R(z,z_{0})^{-1} & 0\\
0 & \,R(z,z_{0})
\end{bmatrix} & \text{in}\;D(\pm z_{0},\varepsilon),\\[20pt]
\begin{bmatrix}0 & \dfrac{\bar{z}_{0}^{-2}\bar{A}}{z\mp\bar{z}_{0}^{-1}}\\
-\dfrac{z\mp\bar{z}_{0}^{-1}}{\bar{z}_{0}^{-2}\bar{A}} & 1
\end{bmatrix}\begin{bmatrix}R(z,z_{0})^{-1} & 0\\
0 & R(z,z_{0})
\end{bmatrix} & \text{in}\;D(\pm\bar{z}_{0}^{-1},\varepsilon),\\[20pt]
\begin{bmatrix}R(z,z_{0})^{-1} & 0\\
0 & R(z,z_{0})
\end{bmatrix} & \text{elsewhere}.
\end{cases}
\]
Notice that $\pm z_{0}$ and $\pm\bar{z}_{0}^{-1}$ are removable
singularities and that $D(z)$ is holomorphic except on $\Sigma(z_{0})$. 

By Lemma \ref{lem:bilinear}, $R(\pm\bar{p}^{-1},z_{0})$ is the reciprocal
of the complex conjugate of $R(\pm p,z_{0})$. In the derivation of
\eqref{eq:polecondition2-1-1} we use the fact that the complex conjugate
$\bar{\tau}$ of $\tau$ has the expression 
\[
\bar{\tau}=R(\pm\bar{p}^{-1},z_{0})^{2}.
\]
Notice that $\mathrm{diag}(z_{0}^{2},z_{0}^{-2})$ is not on the right
but on the left of $M(z)$ in the definition of $\tilde{M}(z)$. It
has no effect on the jump conditions and the pole conditions. It is
there in order to ensure that $\tilde{M}(z)\to I$ as $z\to\infty$. 
\end{proof}
If $|A|$ is very large in Proposition \ref{prop:gyakusu} above,
then the jump matrices on $\Sigma(z_{0})$ in the latter RHP are very
close to the identity matrix.

We introduce
\begin{align*}
S= & \left\{ k;\,\mathrm{tw}(z_{k})>n/t+d\right\} ,\\
T(z)= & T(z,n/t)=\prod_{k\in S}R(z,z_{k})=\prod_{k\in S}\frac{z_{k}^{2}(z^{2}-\bar{z}_{k}^{-2})}{z^{2}-z_{k}^{2}},\quad T(\infty)=\prod_{k\in S}z_{k}^{2},\\
D_{0}(z)= & \mathrm{diag}\left[T(z)^{-1},\,T(z)\right].
\end{align*}
We set $T(z)=1$ if $S$ is empty. By Lemma \ref{lem:bilinear}, $T(\bar{p}^{-1})$
is the reciprocal of the complex conjugate of $T(p)$. In particular,
we have $|T(z)|=1$ on $|z|=1$. 

\section{The region $|n|<2t$}

We study the asymptotic behavior of $R_{n}(t)$ as $t\to\infty$ in
the region defined by 
\begin{equation}
|n|\le(2-V_{0})t,\quad V_{0}\;\mbox{is a constant with\;}0<V_{0}<2.\label{eq:region}
\end{equation}
We have introduced $V_{0}$ in order to ensure that the uniformity
of the estimates. Other regions will be studied later in Section \ref{sec:Other-regions}. 

We follow closely \cite{IDNLS} and \cite{IDNLS2} in which we studied
the defocusing case. If $|n|<2t$, the function $\varphi(z)=2^{-1}it(z-z^{-1})^{2}-n\log z$
has four saddle points $z=S_{k}\,(k=1,2,3,4)$ on $|z|=1$, where
\begin{align}
 & S_{1}=e^{-\pi i/4}A,\,S_{2}=e^{-\pi i/4}\bar{A},\,S_{3}=-S_{1},\,S_{4}=-S_{2},\\
 & A=2^{-1}\bigl(\sqrt{2+n/t\,}-i\sqrt{2-n/t\,}\,\bigr),
\end{align}
and we set $S_{k\pm4}=S_{k}$ by convention. Let $\delta(z)=\delta_{n/t}(z)=\delta(z;n,t)$,
analytic in $|z|\ne1$, be the solution to the Riemann-Hilbert problem
\begin{align}
 & \delta_{+}(z)=\delta_{-}(z)(1+|r(z)|^{2})\;\mbox{ on}\,\arc{S_{1}S_{2}}\;\mbox{and}\;\arc{S_{3}S_{4}},\label{eq:delta1-1}\\
 & \delta_{+}(z)=\delta_{-}(z)\;\mbox{ on}\,\arc{S_{2}S_{3}}\;\mbox{and}\;\arc{S_{4}S_{1}},\label{eq:delta2-1}\\
 & \delta(z)\to1\;\mbox{ as}\;z\to\infty,\label{eq:delta3-1}
\end{align}
where $\arc{S_{j}S_{k}}$ is the minor arc $\subset\left\{ |z|=1\right\} $
joining $S_{j}$ and $S_{k}$ and the \textit{outside} of $\left\{ |z|=1\right\} $
is the plus side. 

This problem can be uniquely solved by the formula 
\begin{equation}
\delta(z)=\exp\left(\frac{-1}{2\pi i}\left[\int_{S_{1}}^{S_{2}}+\int_{S_{3}}^{S_{4}}\right](\tau-z)^{-1}\log(1+|r(\tau)|^{2})\,d\tau\right),\label{eq:deltaintegral}
\end{equation}
where the contours are the arcs $\subset\left\{ |z|=1\right\} $.
We have $\delta(-z)=\delta(z)$ and $\delta'(0)=0$ because $r(-\tau)=-r(\tau)$.
Notice that $0<\delta(0)\le1$. We have $\delta(0)=1$ if and only
if $r(z)$ vanishes identically on the arcs. 

Under Assumptions (A), we have:
\begin{thm}
\label{thm:main1}Let $V_{0}$ be a constant with $0<V_{0}<2$. Assume
that the initial value satisfies the rapid decrease condition $\{R_{n}(0)\}\in\bigcap_{p=0}^{\infty}\ell^{1,p}$
(i.e. \eqref{eq:l1} holds for any $p\in\mathbb{N}$). Then in the
region $|n|\le(2-V_{0})t$, the asymptotic behavior of the solution
to \eqref{eq:IDNLS} is as follows:\textbf{}\\
\textbf{(soliton case)} In the region $-d\le\mathrm{tw}(z_{s})-n/t\le d,\,s\in\left\{ 1,\dots,J\right\} ,$
where $d$ is sufficiently small, we have 

\begin{align*}
R_{n}(t) & =\mathrm{BS}\left(n,t;z_{s},\delta(0)\delta(z_{s})^{-2}p_{s}T(z_{s})^{-2}C_{s}(0)\right)+O(t^{-1/2}),\\
p_{s} & =\prod_{k>s}\,z_{k}^{2}\bar{z}_{k}^{-2},\;\hspace{2em}T(z_{s})=\prod_{k>s}\frac{z_{k}^{2}(z_{s}^{2}-\bar{z}_{k}^{-2})}{z_{s}^{2}-z_{k}^{2}}.
\end{align*}
We have $S=\left\{ k;\,k>s\right\} $, hence the expression of $T(z_{s})$
above. \\
\textbf{(solitonless case)} If $\left\{ \mathrm{tw}(z_{j});j=1,\dots,J\right\} \cap[n/t-d,n/t+d]=\emptyset,$
then there exist $C_{j}=C_{j}(n/t)\in\mathbb{C}$ and $p_{j}=p_{j}(n/t),q_{j}=q_{j}(n/t)\in\mathbb{R}$
($j=1,2$) depending only on the ratio $n/t$ such that 
\begin{equation}
R_{n}(t)=\sum_{j=1}^{2}C_{j}t^{-1/2}e^{-i(p_{j}t+q_{j}\log t)}+O(t^{-1}\log t)\hspace{0.5em}\mbox{as}\hspace{0.5em}t\to\infty.\label{eq:main}
\end{equation}
The behavior of each term in the sum is decaying oscillation of order
$t^{-1/2}$ as $t\to\infty$ while $n/t$ is fixed. The symbol $O$
represents an asymptotic estimate which is uniform with respect to
$(t,n)$ satisfying $|n|\le(2-V_{0})t$. 
\end{thm}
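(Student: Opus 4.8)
The plan is to implement the Deift--Zhou nonlinear steepest descent method for the Riemann--Hilbert problem \eqref{eq:polecondition1}--\eqref{eq:originalRHP3}, following the blueprint of the defocusing treatment in \cite{IDNLS,IDNLS2} with the sign change noted in the Remark after \eqref{eq:originalRHP3} and with the soliton-removal machinery of Proposition~\ref{prop:blowup} and Proposition~\ref{prop:gyakusu}. First I would use Proposition~\ref{prop:blowup} to trade the poles at $\pm z_j,\pm\bar z_j^{-1}$ for $j\in S=\{k:\mathrm{tw}(z_k)>n/t+d\}$ (these have $\mathrm{Re}\,\varphi(z_j)>0$, so by \eqref{eq:exp2varphi} the corresponding residue matrices are exponentially large) for jumps on small circles, and conjugate by $D_0(z)=\mathrm{diag}[T(z)^{-1},T(z)]$; since $|T(z)|=1$ on $|z|=1$ this leaves the modulus-type structure of the jump on $C$ intact while, by Proposition~\ref{prop:gyakusu}, the circle jumps for $k\in S$ become exponentially close to $I$ and may be discarded with $O(t^{-\infty})$ error, and the remaining pole data for the other eigenvalues get multiplied by $T(z_j)^{-2}p_j$-type factors (the $p_j$ coming from the $\pm z_k$ pairing in the quartet, exactly as in the formal formula in the introduction). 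Next I would perform the standard $\delta$-function conjugation using $\delta(z)=\delta_{n/t}(z)$ from \eqref{eq:deltaintegral} to remove the $1+|r|^2$ diagonal entry across the arcs $\arc{S_1S_2},\arc{S_3S_4}$ and convert the jump matrix into one that admits the lower/upper triangular factorization needed for contour deformation; this is the step that introduces the $\delta(0)\delta(z_s)^{-2}$ factor into the effective norming constant via the behavior of $\delta$ at $z=0$ and at $z=z_s$ together with the reconstruction formula \eqref{eq:Rnreconstruction}.

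The core of the argument is then the local analysis near the four saddle points $S_1,S_2,S_3,S_4$ on $|z|=1$. I would open lenses around each arc, deforming the analytic factors of the factorized jump matrix off the unit circle onto steepest-descent contours on which $\mathrm{Re}(2\varphi)$ has a definite sign, so that away from the $S_k$ the jumps are $I+O(e^{-ct})$; the rational approximation of $r$ by an analytic function (possible by the rapid-decrease hypothesis $\{R_n(0)\}\in\bigcap_p\ell^{1,p}$, so $r$ is smooth) handles the non-analyticity of $r$ with a controllable $\bar\partial$- or polynomial-interpolation error of order $t^{-1/2}$ or better. Near each $S_k$ the phase $\varphi$ is locally quadratic, $\varphi(z)\approx\varphi(S_k)+\tfrac12\varphi''(S_k)(z-S_k)^2$, so the local parametrix is built from parabolic cylinder functions in the standard way; matching the four local models to the identity along the circles bounding small disks around the $S_k$ gives the explicit $t^{-1/2}$ contribution, and because of the symmetry $S_3=-S_1$, $S_4=-S_2$ and $\delta(-z)=\delta(z)$, $r(-z)=-r(z)$ the four contributions combine, with the $(2,1)$-entry producing the sum of two oscillatory terms $C_j t^{-1/2}e^{-i(p_jt+q_j\log t)}$; the $q_j\log t$ factor is the familiar consequence of the $|z-S_k|^{\mp 2i\nu_k}$-type behavior of $\delta$ near the saddle, with $\nu_k$ proportional to $\log(1+|r(S_k)|^2)$.

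In the soliton case $-d\le\mathrm{tw}(z_s)-n/t\le d$ the additional ingredient is that $z_s$ and its quartet partners now lie (by the choice of $d$ and Assumptions (A)) at a nonzero distance from all four arcs and from the $S_k$, with $\mathrm{Re}\,\varphi(z_s)$ bounded but not large; I would keep this single quartet of poles, remove all poles with $k>s$ as above (contributing the $p_s$ and $T(z_s)^{-2}$ factors) and all poles with $k<s$ via Proposition~\ref{prop:blowup} directly (their residues are exponentially small there since $\mathrm{Re}\,\varphi(z_k)<0$), and after the $\delta$-conjugation and the saddle-point deformation the reduced problem is, up to $O(t^{-1/2})$, exactly the one-soliton RHP of Proposition~\ref{prop:m_0} with norming constant $C_s(0)$ replaced by $\delta(0)\delta(z_s)^{-2}p_sT(z_s)^{-2}C_s(0)$. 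Invoking the unique solvability there and the reconstruction formula \eqref{eq:Rnreconstruction}, together with Lemma~\ref{lem:modulus1} to see that the unimodular part of the correction only shifts the carrier-wave phase, yields the stated $\mathrm{BS}$ formula with an $O(t^{-1/2})$ error.

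I expect the main obstacle to be the bookkeeping of the norming-constant transformation: tracking precisely how the successive conjugations (by $D_0$, then by $\delta^{\sigma_3}$, then the local $\bar z_0^{-1}$-versus-$z_0$ reshuffling in Proposition~\ref{prop:gyakusu} applied once per eigenvalue with $k>s$) compose to give exactly the factor $\delta(0)\delta(z_s)^{-2}p_sT(z_s)^{-2}$, and verifying that the small-circle contributions from the removed poles and the lens boundaries really are $O(t^{-\infty})$ uniformly in the region \eqref{eq:region} — this uniformity is where the constant $V_0$ is essential, since as $|n|/t\to 2$ the saddle points $S_k$ coalesce in pairs and $\varphi''(S_k)\to 0$, so all estimates must be quantified in terms of the separation $2-|n|/t\ge V_0>0$. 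The analytical heart, the parabolic-cylinder local parametrix and its error bound, is by now routine and can be imported from \cite{DZ} and the treatment in \cite{IDNLS2} essentially verbatim after the sign change.
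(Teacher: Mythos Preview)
Your plan matches the paper's approach closely: pole removal via Propositions~\ref{prop:blowup}--\ref{prop:gyakusu}, $D_0$- and $\delta$-conjugation, lens opening, and parabolic-cylinder local models near the four saddles, with the surviving $z_s$-quartet producing the $\mathrm{BS}$ term. Two focusing-specific points, however, are not covered by ``import \cite{IDNLS,IDNLS2} verbatim after the sign change.''

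First, to make precise your claim that ``the reduced problem is, up to $O(t^{-1/2})$, exactly the one-soliton RHP,'' the paper introduces a \emph{modified} Cauchy kernel $\Omega(\zeta,z)$ (Section~\ref{sec:Modified-Cauchy-kernel}) whose relevant columns vanish at $\pm z_s,\pm\bar z_s^{-1}$, so that the Beals--Coifman representation $m_0^\sharp+\int_\Sigma\mu^\sharp w^\sharp\Omega$ automatically satisfies the residual pole conditions \eqref{eq:msharp2}--\eqref{eq:msharp3} and the one-soliton matrix $m_0^\sharp$ appears as the inhomogeneous term. An outer-parametrix/ratio argument would also work, but you should commit to a device; the ordinary Cauchy integral produces something holomorphic off $\Sigma$ and cannot carry the poles. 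The paper also inserts a $\Delta(0)^{1/2}$ normalization \emph{before} the $\Delta(z)$-conjugation precisely to preserve the conjugate symmetry between \eqref{eq:20150723a} and \eqref{eq:20150723b}; this is part of the ``bookkeeping'' you correctly flag as the main obstacle.

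Second, and more seriously, the existence of the resolvent $(1-C_{w^\sharp})^{-1}$ does \emph{not} follow from the defocusing argument. There $|r|<1$ and a Neumann series suffices (\cite[(3.94)]{DZ}, \cite[p.~797]{IDNLS}); here $|r(S_k)|$ is unrestricted and the local model jump $\bigl(\begin{smallmatrix}1&-\bar r e^{-iz^2/2}\\ -r e^{iz^2/2}&1+|r|^2\end{smallmatrix}\bigr)$ is not a small perturbation of $I$. The paper (Proposition~\ref{prop:resolvents}) instead invokes \cite[Lemma~5.9]{DP}, exploiting positive-definiteness of the jump matrix on $\mathbb{R}$ together with an orientation-reversal identity from \cite{DZ_Sobolev}. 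This is the one analytic step that is genuinely new relative to the defocusing treatment and should be singled out in your plan.
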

\begin{proof}
The soliton case is shown in Proposition \ref{prop: soliton}. The
solitonless case can be proved in almost the same way as \cite{IDNLS}.
See Remark \ref{rem:proofsolitonless}. 
\end{proof}
\begin{rem}
\label{rem:delta}We see that $\delta(z)=\delta(z;n,t)$ is determined
by $z$ and $n/t$. When we are interested in a particular ray $n/t=\mathrm{const.}$,
we suppress the dependence on $n/t$. On the other hand, when we are
interested in multiple rays, we prefer the notation $\delta_{n/t}(z)$.
\end{rem}
We set $d=\frac{1}{2}\log\delta(0)$ and introduce the following two
matrices: 
\begin{eqnarray*}
 & \Delta(z)=\begin{bmatrix}\delta(z) & 0\\
0 & \delta(z)^{-1}
\end{bmatrix}, & \,\\
 & \Delta(0)^{1/2}=\begin{bmatrix}\delta(0)^{1/2} & 0\\
0 & \delta(0)^{-1/2}
\end{bmatrix} & =e^{d\sigma_{3}}.
\end{eqnarray*}

Set $\tilde{\delta}(z)=\bar{\delta}(\bar{z}^{-1})=\overline{\delta(\bar{z}^{-1})}$.
Then it is the unique solution to the problem below:
\begin{align*}
 & \tilde{\delta}_{+}(z)=\tilde{\delta}_{-}(z)/(1+|r(z)|^{2})\;\mbox{ on}\,\arc{S_{1}S_{2}}\;\mbox{and}\;\arc{S_{3}S_{4}},\\
 & \tilde{\delta}_{+}(z)=\tilde{\delta}_{-}(z)\;\mbox{ on}\,\arc{S_{2}S_{3}}\;\mbox{and}\;\arc{S_{4}S_{1}},\\
 & \tilde{\delta}(0)=1.
\end{align*}
The solution formula is
\[
\tilde{\delta}(z)=\delta(0)\exp\left(\frac{1}{2\pi i}\left[\int_{S_{1}}^{S_{2}}+\int_{S_{3}}^{S_{4}}\right](\tau-z)^{-1}\log(1+|r(\tau)|^{2})\,d\tau\right)=\delta(0)\delta(z)^{-1}.
\]
So we get $\overline{\delta(\bar{z}^{-1})}=\delta(0)\delta(z)^{-1}$.
Since $\delta(0)>0$, we have
\begin{equation}
\delta(\bar{z}^{-1})=\delta(0)\bar{\delta}(z)^{-1}.\label{eq:deltareflection}
\end{equation}

With Propositions \ref{prop:blowup} and \ref{prop:gyakusu} in mind,
we define a matrix $D_{1}(z)$ as follows. For each $j$ with $\mathrm{tw}(z_{j})>n/t+d$,
we define 
\begin{eqnarray*}
 & D_{1}(z)= & \begin{cases}
\begin{bmatrix}1 & -\dfrac{z\mp z_{j}}{z_{j}^{-2n}C_{j}(t)}\\
\dfrac{z_{j}^{-2n}C_{j}(t)}{z\mp z_{j}} & 0
\end{bmatrix}D_{0}(z)\Delta(0)^{1/2} & \text{in}\;D(\pm z_{j},\varepsilon),\\[20pt]
\begin{bmatrix}0 & \dfrac{\bar{z}_{j}^{-2n-2}\bar{C}_{j}(t)}{z\mp\bar{z}_{j}^{-1}}\\
-\dfrac{z\mp\bar{z}_{j}^{-1}}{\bar{z}_{j}^{-2n-2}\bar{C}_{j}(t)} & 1
\end{bmatrix}D_{0}(z)\Delta(0)^{1/2} & \text{in}\;D(\pm\bar{z}_{j}^{-1},\varepsilon)
\end{cases}
\end{eqnarray*}
and set $D_{1}(z)=D_{0}(z)\Delta(0)^{1/2}$ elsewhere. Notice that
we have 
\[
|z_{j}^{-2n}C_{j}(t)|=|C_{j}(0)|\exp[2\mathrm{Re\,}\varphi(z_{j})]=|C_{j}(0)|\exp[2\alpha_{j}t\left\{ \mathrm{tw}(z_{j})-n/t\right\} ]
\]
by \eqref{eq:exp2varphi} and \eqref{eq:tw}.
\begin{prop}
\label{prop:onlyonepole} Let $\sigma$ in Proposition~\ref{prop:blowup}
be defined by $\sigma=\left\{ 1,2,\dots,J\right\} \setminus\left\{ s\right\} .$
Here $s$ is such that $-d\leq\mathrm{tw}(z_{s})-n/t\leq d$.\footnote{If there is no such $s$, set $\sigma=\left\{ 1,2,\dots,J\right\} .$}
Then $\left\{ \pm z_{s},\pm\bar{z}_{s}^{-1}\right\} $ is the only
quartet of poles of $\hat{m}(z)$. Set $\tilde{m}(z)=\mathrm{diag}(T(\infty),T(\infty)^{-1})\Delta(0)^{-1/2}\hat{m}(z)D_{1}(z)$.
Then \\
\textup{(i)} For each $j$ with $\mathrm{tw}(z_{j})-n/t<-d$, we have
\begin{align*}
\tilde{m}_{+}(z) & =\tilde{m}_{-}(z)I_{\mathrm{exp}}^{-}(z;\pm z_{j})\hspace{1em}on\;C(\pm z_{j},\varepsilon),\\
\text{where}\; & I_{\mathrm{exp}}^{-}(z;\pm z_{j})=\begin{bmatrix}1 & 0\\
-\dfrac{z_{j}^{-2n}\delta(0)T(z)^{-2}C_{j}(t)}{z\mp z_{j}} & 1
\end{bmatrix},\\
\tilde{m}_{+}(z) & =\tilde{m}_{-}(z)I_{\mathrm{exp}}^{-}(z;\pm\bar{z}_{j}^{-1})\text{\hspace{1em}on}\;C(\pm\bar{z}_{j}^{-1},\varepsilon),\\
\text{where}\; & I_{\mathrm{exp}}^{-}(z;\pm\bar{z}_{j}^{-1})=\begin{bmatrix}1 & -\dfrac{\bar{z}_{j}^{-2n-2}\delta(0)^{-1}T(z)^{2}\bar{C}_{j}(t)}{z\mp\bar{z}_{j}^{-1}}\\
0 & 1
\end{bmatrix}.
\end{align*}
\textup{(ii)} For each $j$ with $\mathrm{tw}(z_{j})-n/t>d$, we have
\begin{align*}
\tilde{m}_{+}(z) & =\tilde{m}_{-}(z)I_{\mathrm{exp}}^{+}(z;\pm z_{j})\hspace{1em}\text{on}\;C(\pm z_{j},\varepsilon),\\
\text{where}\; & I_{\mathrm{exp}}^{+}(z;\pm z_{j})=\begin{bmatrix}1 & -\dfrac{\delta(0)^{-1}(z\mp z_{j})}{z_{j}^{-2n}T(z)^{-2}C_{j}(t)}\\
0 & 1
\end{bmatrix},\\
\tilde{m}_{+}(z) & =\tilde{m}_{-}(z)I_{\mathrm{exp}}^{+}(z;\pm\bar{z}_{j}^{-1})\hspace{1em}\text{on}\;C(\pm\bar{z}_{j}^{-1},\varepsilon),\\
\text{where}\; & I_{\mathrm{exp}}^{+}(z;\pm\bar{z}_{j}^{-1})=\begin{bmatrix}1 & 0\\
-\dfrac{\delta(0)(z\mp\bar{z}_{j}^{-1})}{\bar{z}_{j}^{-2n-2}T(z)^{2}\bar{C}_{j}(t)} & 1
\end{bmatrix}.
\end{align*}
\textup{(iii)} If $j=s$, the pole conditions become 
\begin{align}
\mathrm{Res}(\tilde{m}(z);\pm z_{s}) & =\lim_{z\to\pm z_{s}}\tilde{m}(z)I_{\mathrm{res}}(z_{s}),\nonumber \\
\text{where}\; & I_{\mathrm{res}}(z_{s})=\begin{bmatrix}0 & 0\\
z_{s}^{-2n}\delta(0)T(z_{s})^{-2}C_{s}(t) & 0
\end{bmatrix},\label{eq:Ires1}\\
\mathrm{Res}(\tilde{m}(z);\pm\bar{z}_{s}^{-1}) & =\lim_{z\to\pm\bar{z}_{s}^{-1}}\tilde{m}(z)I_{\mathrm{res}}(\bar{z}_{s}^{-1}),\nonumber \\
where\; & I_{\mathrm{res}}(\bar{z}_{s}^{-1})=\begin{bmatrix}0 & \bar{z}_{s}^{-2n-2}\delta(0)^{-1}\bar{T}(z_{s})^{-2}\bar{C}_{s}(t)\\
0 & 0
\end{bmatrix}.\label{eq:Ires2}
\end{align}
Notice that any $j\in\left\{ 1,2,\dots,J\right\} $ satisfies one
of \textup{(i)}, \textup{(ii)} or \textup{(iii)}. It is possible that
no $j$ satisfies \textup{(iii)}. \\
\textup{(iv)} On $C\colon|z|=1$ (clockwise), we have 
\begin{equation}
\tilde{m}_{+}(z)=\tilde{m}_{-}(z)\Delta(0)^{-1/2}D_{0}(z)^{-1}v(z)D_{0}(z)\Delta(0)^{1/2}.\label{eq:D_0^-1vD_0}
\end{equation}
\textup{(v)} $\tilde{m}(z)\to I$ as $z\to\infty$. 
\end{prop}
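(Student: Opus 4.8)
The plan is to verify Proposition~\ref{prop:onlyonepole} by tracking how each ingredient in the definition $\tilde m(z)=\mathrm{diag}(T(\infty),T(\infty)^{-1})\Delta(0)^{-1/2}\hat m(z)D_1(z)$ transforms the jump and pole data of $\hat m(z)$. First I would observe that the left multiplier $\mathrm{diag}(T(\infty),T(\infty)^{-1})\Delta(0)^{-1/2}$ is a constant invertible matrix; by the standard argument (already invoked in the proof of Proposition~\ref{prop:gyakusu}) it has no effect on jump relations or residue relations and only serves to normalize $\tilde m(z)\to I$ at infinity, since $D_0(z)\to\mathrm{diag}(T(\infty)^{-1},T(\infty))$ and $\hat m(z)\to I$. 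That disposes of part (v). For part (iv), away from the quartets of poles $D_1(z)=D_0(z)\Delta(0)^{1/2}$, which is holomorphic across $C\colon|z|=1$ except for the jump carried by $\delta(z)$ inside $D_0$; the matrix $\Delta(0)^{1/2}=e^{d\sigma_3}$ is constant, so conjugating $v(z)$ by $D_1(z)$ gives exactly $\Delta(0)^{-1/2}D_0(z)^{-1}v(z)D_0(z)\Delta(0)^{1/2}$ as claimed.

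Next I would treat the small circles $C(\pm z_j,\varepsilon)$ and $C(\pm\bar z_j^{-1},\varepsilon)$, splitting into the three cases according to the sign of $\mathrm{tw}(z_j)-n/t$. The key point is that on the disk $D(\pm z_j,\varepsilon)$ the factor $D_1(z)$ equals either the ``pole-removing'' matrix of Proposition~\ref{prop:blowup} composed with $D_0(z)\Delta(0)^{1/2}$ (when $\mathrm{tw}(z_j)>n/t+d$, i.e. $j\in S$), or just $D_0(z)\Delta(0)^{1/2}$ (when $\mathrm{tw}(z_j)<n/t-d$ or $j=s$). So in case (i), $j\notin S$: $\hat m(z)$ already has the triangular jumps \eqref{eq:polecondition1'}--\eqref{eq:polecondition2'} on the small circles, and I conjugate them by $D_0(z)\Delta(0)^{1/2}$; since $D_0(z)=\mathrm{diag}(T(z)^{-1},T(z))$ and $\Delta(0)^{1/2}=e^{d\sigma_3}$ are diagonal and holomorphic near $\pm z_j$, the conjugation multiplies the off-diagonal entry $-z_j^{-2n}C_j(t)/(z\mp z_j)$ by $T(z)^{-2}\delta(0)$ (using $e^{2d}=\delta(0)$), producing $I_{\mathrm{exp}}^-(z;\pm z_j)$; the $\bar z_j^{-1}$-circle is handled the same way with the opposite diagonal weight, giving $T(z)^2\delta(0)^{-1}$. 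In case (ii), $j\in S$: here $D_1(z)$ on $D(\pm z_j,\varepsilon)$ is the Proposition~\ref{prop:gyakusu}-type matrix, so the computation is exactly that of Proposition~\ref{prop:gyakusu} with $A=z_j^{-2n}C_j(t)$, $z_0=z_j$ and with the extra bookkeeping factor $T(z)^{-1}$ coming from the other factors of $D_0(z)$, together with $\Delta(0)^{1/2}$; the resulting jump has off-diagonal entry $-\delta(0)^{-1}(z\mp z_j)/(z_j^{-2n}T(z)^{-2}C_j(t))$, which is $I_{\mathrm{exp}}^+(z;\pm z_j)$, and symmetrically on the $\bar z_j^{-1}$-circle. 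Note the jump direction flips (lower-triangular becomes upper-triangular) precisely because Proposition~\ref{prop:gyakusu} swaps the two triangular structures, and the relation $R(z,z_j)=$ the local factor of $T(z)$ makes $R(z,z_j)^{-2}$ combine with the remaining $\prod_{k\in S,k\neq j}R(z,z_k)$ into the stated $T(z)^{\pm2}$.

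For part (iii), $j=s$, the pole is kept: $D_1(z)=D_0(z)\Delta(0)^{1/2}$ near $\pm z_s$ and $\pm\bar z_s^{-1}$, so I apply the residue-transformation rule of Proposition~\ref{prop:gyakusu} (equations \eqref{eq:polecondition1-1-1}--\eqref{eq:polecondition2-1-1}) with $z_0$ ranging over the eigenvalues in $S$: each $R(\pm z_s,z_k)^{-2}$ multiplies $C_s(t)$, and $\prod_{k\in S}R(z_s,z_k)=T(z_s)$ (with $S=\{k:k>s\}$ once we are in the soliton region for $z_s$, by the ordering in Assumptions~(A)), so the norming constant becomes $T(z_s)^{-2}C_s(t)$; conjugating the residue matrix \eqref{eq:polecondition1} by the diagonal $D_0(z_s)\Delta(0)^{1/2}$ multiplies its $(2,1)$-entry by $T(z_s)^{-2}\delta(0)$, giving $I_{\mathrm{res}}(z_s)$, and the $\bar z_s^{-1}$-residue matrix picks up the conjugate weight $\bar T(z_s)^{-2}\delta(0)^{-1}$, using \eqref{eq:deltareflection} and Lemma~\ref{lem:bilinear} to identify $T(\bar z_s^{-1})$ with the reciprocal of $\overline{T(z_s)}$. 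The main obstacle will be purely bookkeeping: keeping straight the three different diagonal weights ($T(z)^{\pm1}$ from $D_0$, $\delta(0)^{\pm1/2}$ from $\Delta(0)^{1/2}$, and the $R(z,z_k)$-factors absorbed into or peeled off of $T(z)$), verifying that the signs and the $\pm$ in $z\mp z_j$ propagate correctly through the $\mathrm{ad}\,\sigma_3$-type conjugations, and confirming that the ``reciprocal of the complex conjugate'' identities of Lemma~\ref{lem:bilinear} line up the barred and unbarred formulas on the two circles of each quartet. None of these steps is deep; the content is entirely that Propositions~\ref{prop:blowup} and~\ref{prop:gyakusu} have already been proved, and what remains is to compose their conjugating matrices with the diagonal factors $D_0(z)$ and $\Delta(0)^{1/2}$ and read off the entries.
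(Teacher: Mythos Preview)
Your proposal is correct and follows exactly the paper's approach: apply Proposition~\ref{prop:gyakusu} repeatedly for each $j\in S$, observe that the constant left factor $\mathrm{diag}(T(\infty),T(\infty)^{-1})\Delta(0)^{-1/2}$ only serves to restore the normalization at infinity, and use the identity $T(\bar z_j^{-1})^{-1}=\overline{T(z_j)}$ from Lemma~\ref{lem:bilinear} to match the barred formulas. One small slip to fix: in part~(iv) you say $D_0$ carries ``the jump carried by $\delta(z)$ inside $D_0$,'' but $D_0(z)=\mathrm{diag}(T(z)^{-1},T(z))$ contains no $\delta$ and is holomorphic across $|z|=1$, so the conjugation there is even simpler than you describe (the $\delta$-jump enters only later, at the $\Delta(z)$-conjugation step).
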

\begin{proof}
Apply Proposition~\ref{prop:gyakusu} repeatedly when $z_{0}=\pm z_{j}$
for $j\in\left\{ \mathrm{tw}(z_{j})-n/t>d\right\} $. We have (v)
owing to the factor $\mathrm{diag}(T(\infty),T(\infty)^{-1})\Delta(0)^{-1/2}$.
It has no effect on the jump and the pole conditions. We have used
the fact that $T(\bar{z}_{j}^{-1})^{-1}=\bar{T}(z_{j})$. 
\end{proof}
Because of \eqref{eq:exp2varphi} and \eqref{eq:tw}, $C_{j}(t)$
and $\bar{C}_{j}(t)$ are exponentially decreasing (resp. increasing)
as $t\to\infty$ if $\mathrm{tw}(z_{j})-n/t<0$ (resp. $>0$). The
jump matrices $I_{\mathrm{exp}}^{\pm}(z;\pm z_{j})$ and $I_{\mathrm{exp}}^{\pm}(z;\pm\bar{z}_{j}^{-1})$
in (i) and (ii) of Proposition \ref{prop:onlyonepole} are exponentially
close to $I$. The case (iii) is about a soliton. 

Compare $I_{\mathrm{res}}(z_{s})$ and $I_{\mathrm{res}}(\bar{z}_{s}^{-1})$.
The symmetry in the pair \eqref{eq:polecondition1}-\eqref{eq:polecondition2},
which is essential in Proposition \ref{prop:m_0}, is lost in the
sense that $\delta(0)^{-1}$ is not the complex conjugate of $\delta(0).$
Symmetry will be recovered in \eqref{eq:20150723a}-\eqref{eq:20150723b}
after the $\Delta(z)$-conjugation. The fact is that we have introduced
$\delta(0)$ and $\delta(0)^{-1}$ as a precaution in order to perform
the $\Delta(z)$-conjugation without breaking symmetry. 

Conjugating our Riemann-Hilbert problem in Proposition \ref{prop:onlyonepole}
by $\Delta(z)$ leads to the following factorization problem for $\tilde{m}\Delta^{-1}$,
in which $\pm[\mathrm{tw}(z_{j})-n/t]>d,\sigma=\pm1$: 
\begin{align}
 & (\tilde{m}\Delta^{-1})_{+}(z)=(\tilde{m}\Delta^{-1})_{-}(z)(\Delta_{-}\Delta(0)^{-1/2}D_{0}^{-1}vD_{0}\Delta(0)^{1/2}\Delta_{+}^{-1})(z)\hspace{1em}\mathrm{on}\,C,\label{eq:DeltaRHP1}\\
 & (\tilde{m}\Delta^{-1})_{+}(z)=(\tilde{m}\Delta^{-1})_{-}(z)\left(\Delta I_{\mathrm{exp}}^{\pm}(z;\sigma z_{j})\Delta^{-1}\right)(z)\hspace{1em}\mathrm{on}\,C(\sigma z_{j},\varepsilon),\label{eq:exponentially1}\\
 & (\tilde{m}\Delta^{-1})_{+}(z)=(\tilde{m}\Delta^{-1})_{-}(z)\left(\Delta I_{\mathrm{exp}}^{\pm}(z;\sigma\bar{z}_{j}^{-1})\Delta^{-1}\right)(z)\hspace{1em}\mathrm{on}\,C(\sigma\bar{z}_{j}^{-1},\varepsilon),\label{eq:exponentially2}\\
 & \mathrm{Res}(\tilde{m}(z)\Delta(z)^{-1};\pm z_{s})=\lim_{z\to\pm z_{s}}\tilde{m}(z)\Delta(z)^{-1}\Delta(z_{s})I_{\mathrm{\mathrm{res}}}(z_{s})\Delta(z_{s})^{-1},\label{eq:DeltaRHP2}\\
 & \mathrm{Res}(\tilde{m}(z)\Delta(z)^{-1};\pm\bar{z}_{s}^{-1})=\lim_{z\to\pm\bar{z}_{s}^{-1}}\tilde{m}(z)\Delta(z)^{-1}\Delta(\bar{z}_{s}^{-1})I_{\mathrm{\mathrm{res}}}(\bar{z}_{s}^{-1})\Delta(\bar{z}_{s}^{-1})^{-1},\label{eq:DeltaRHP3}\\
 & \tilde{m}\Delta^{-1}\to I\quad(z\to\infty).\label{eq:DeltaRHP4}
\end{align}
Notice that the jump matrices in \eqref{eq:exponentially1} and \eqref{eq:exponentially2}
are exponentially close to $I$ as $t$ tends to infinity. We calculate
the jump matrix in \eqref{eq:DeltaRHP1}. On $C\colon|z|=1$(clockwise)
we have $|T(z)|=1,T(z)^{-1}=\overline{T(z)}$ and \eqref{eq:v} implies
\begin{align}
 & \Delta(0)^{-1/2}D_{0}(z)^{-1}v(z)D_{0}(z)\Delta(0)^{1/2}\nonumber \\
 & =e^{-\varphi\,\mathrm{ad\sigma_{3}}}\begin{bmatrix}1+|r(z)|^{2} & \delta(0)^{-1}\bar{r}(z)T(z)^{2}\\
\delta(0)r(z)\bar{T}(z)^{2} & 1
\end{bmatrix},\quad z\in C.\label{eq:D_0^-1vD_0-1}
\end{align}
Since $d=\frac{1}{2}\log\delta(0)$, we obtain 
\begin{align*}
 & \Delta_{-}\Delta(0)^{-1/2}D_{0}^{-1}vD_{0}\Delta(0)^{1/2}\Delta_{+}^{-1}\\
 & =\Delta(0)^{-1/2}\left\{ \Delta_{-}D_{0}^{-1}vD_{0}\Delta_{+}^{-1}\right\} \Delta(0)^{1/2}\\
 & =e^{-(\varphi+d)\mathrm{ad}\sigma_{3}}\begin{bmatrix}\delta_{+}^{-1}\delta_{-}(1+|r|^{2}) & \delta_{+}\delta_{-}\bar{r}T^{2}\\
\delta_{+}^{-1}\delta_{-}^{-1}r\bar{T}^{2} & \delta_{+}\delta_{-}^{-1}
\end{bmatrix},\quad z\in\,C.
\end{align*}
On the other hand, \eqref{eq:Ires1} implies 
\begin{equation}
\Delta(z_{s})I_{\mathrm{res}}(z_{s})\Delta(z_{s})^{-1}=\begin{bmatrix}0 & 0\\
z_{s}^{-2n}\delta(0)\delta(z_{s})^{-2}T(z_{s})^{-2}C_{s}(t) & 0
\end{bmatrix}\label{eq:20150723a}
\end{equation}
and we get by \eqref{eq:deltareflection} and \eqref{eq:Ires2}
\begin{eqnarray}
\Delta(\bar{z}_{s}^{-1})I_{\mathrm{res}}(\bar{z}_{s}^{-1})\Delta(\bar{z}_{s}^{-1})^{-1} & =\begin{bmatrix}0 & \bar{z}_{s}^{-2n-2}\delta(0)\bar{\delta}(z_{s})^{-2}\bar{T}(z_{s})^{-2}\bar{C}_{s}(t)\\
0 & 0
\end{bmatrix}.\label{eq:20150723b}
\end{eqnarray}
Therefore\textsl{ $\delta(0)\delta(z_{s})^{-2}T(z_{s})^{-2}C_{s}(t)$
plays the role of the norming constant in \eqref{eq:polecondition1}-\eqref{eq:polecondition2}. }

Now, we rewrite \eqref{eq:DeltaRHP1} by choosing the counterclockwise
orientation (the \textit{inside} being the plus side) on $\arc{S_{2}S_{3}}$
and $\arc{S_{4}S_{1}}$ and the clockwise orientation on $\arc{S_{1}S_{2}}$
and $\arc{S_{3}S_{4}}$. The circle $|z|=1$ with this new orientation
is denoted by $\widetilde{C}$ and \eqref{eq:DeltaRHP1} is replaced
with 
\begin{align}
 & (\tilde{m}\Delta^{-1})_{+}(z)=(\tilde{m}\Delta^{-1})_{-}(z)\tilde{v}(z),\quad z\in\widetilde{C}\label{eq:mdelta^-1_1}
\end{align}
for another 2$\times$2 matrix $\tilde{v}$. Notice that \eqref{eq:exponentially1}-\eqref{eq:DeltaRHP4}
remain unchanged. We have 
\begin{align*}
\tilde{v} & =\tilde{v}(z)=e^{-(\varphi+d)\mathrm{ad}\sigma_{3}}\left(\begin{bmatrix}1 & \;0\\
\delta_{-}^{-2}r\bar{T}^{2}/(1+|r|^{2})\; & \;1
\end{bmatrix}\begin{bmatrix}1\; & \delta_{+}^{2}\bar{r}T^{2}/(1+|r|^{2})\\
0\; & \;1
\end{bmatrix}\right) & \mbox{ }
\end{align*}
on $\arc{S_{1}S_{2}}\cup\arc{S_{3}S_{4}}$ and 
\[
\tilde{v}=e^{-(\varphi+d)\mathrm{ad}\sigma_{3}}\left(\begin{bmatrix}1\; & \;0\\
-\delta^{-2}r\bar{T}^{2}\; & \;1
\end{bmatrix}\begin{bmatrix}1\; & -\delta^{2}\bar{r}T^{2}\\
0\; & \;1
\end{bmatrix}\right)
\]
 on $\arc{S_{2}S_{3}}\cup\arc{S_{4}S_{1}}$. 

Set 
\begin{align}
\rho & =-\bar{r}T^{2}/(1+|r|^{2}) & \mbox{\quad on\;}\arc{S_{1}S_{2}}\cup\arc{S_{3}S_{4}},\label{eq:rho1}\\
 & =\bar{r}T^{2} & \mbox{\quad on\;}\arc{S_{2}S_{3}}\cup\arc{S_{4}S_{1}}.\label{eq:rho2}
\end{align}
Then $\tilde{v}$ admits the unified expression 
\begin{equation}
\tilde{v}=e^{-(\varphi+d)\mathrm{ad}\sigma_{3}}\left(\begin{bmatrix}1\; & \;0\\
-\delta_{-}^{-2}\bar{\rho}\; & \;1
\end{bmatrix}\begin{bmatrix}1\; & \;-\delta_{+}^{2}\rho\\
0\; & \;1
\end{bmatrix}\right)\label{eq:unifiedexpression}
\end{equation}
on any of the arcs. 

\begin{rem}
\label{rem: whatisdifferent}What is different from \cite[p.773]{IDNLS}
is that $\varphi$, $r$, $\bar{r}$ and $1-|r|^{2}$ are replaced
with $\varphi+d$, $r\bar{T}^{2},-\bar{r}T^{2}$ and $1+|r|^{2}$.
Recall that $|T|=1$ on $\tilde{C}.$ The additional term $+d$ and
the action of $\exp(d\mathrm{\,ad}\sigma_{3})$ can be treated by
using the technique of \cite[(18)]{IDNLS2}. 
\end{rem}

\section{A Riemann-Hilbert problem on a new contour}

\label{sec:newcontour} In this section, we introduce a new contour
and formulate a new Riemann-Hilbert problem, which is equivalent to
the problem \eqref{eq:mdelta^-1_1}, \eqref{eq:exponentially1}-\eqref{eq:DeltaRHP4}.
The new jump matrix admits a certain lower/upper factorization which
will be the basis of the integral representation given later. 

\begin{figure}
\includegraphics[scale=0.3]{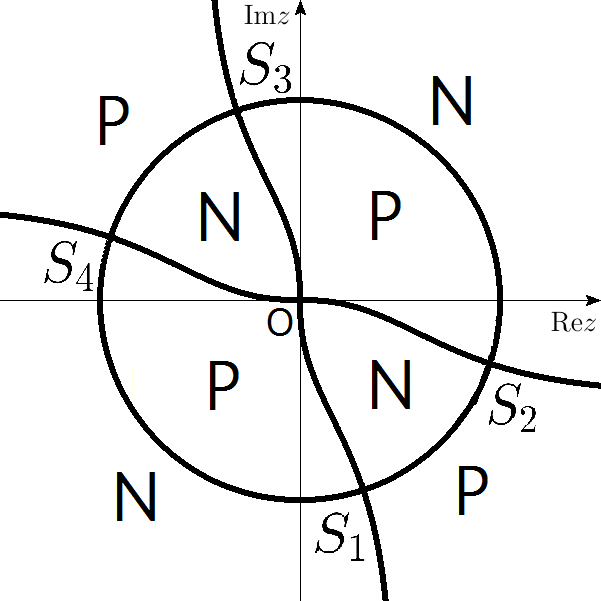}

\caption{Signs of $\mathrm{Re}\,\varphi$}
\label{fig:phase}

\end{figure}

The signs of $\mathrm{Re}\,\varphi$ are shown in Figure \ref{fig:phase}:
P and N stand for 'positive' and 'negative' respectively and $S_{j}$'s
are the saddle points.  Let $\Sigma$ be the contour (including the
dotted and dashed parts) in Figure 2. The + signs indicate the plus
side. The black squares are the poles $\pm z_{s},\pm\bar{z}_{s}^{-1}$
in \eqref{eq:DeltaRHP2} and \eqref{eq:DeltaRHP3}. The small circles
are centered at $\pm z_{j}$ and $\pm\bar{z}_{j}^{-1}$ for some $j\ne s$.
We can bend $\Sigma$ so that the black squares and the small circles
are in $\Omega_{1}\cup\Omega_{2}.$ There may be more quartets of
small circles, but they are omitted in the figure. The orientations
of the small circles are irrelevant because the jump matrices along
them are exponentially close to the identity matrix. 

The large circle is $\tilde{C}$. The union of the quartet(s) of the
small circles is called $C^{\circ}$. The dotted part and the dashed
part are called $L$ and $L'$ respectively. We have $\Sigma=\tilde{C}\cup C^{\circ}\cup L\cup L'$.
Notice that $\mathrm{Re\,\varphi>0}$ on $L\setminus\left\{ \textrm{the saddle points}\right\} $
and that $\mathrm{Re\,\varphi<0}$ on $L'\setminus\left\{ \textrm{the saddle points}\right\} $. 

On each arc joining adjacent saddle points, we have the decomposition
\[
\rho=R+h_{I}+h_{II},\;\bar{\rho}=\bar{R}+\bar{h}_{I}+\bar{h}_{II}.
\]
This is a 'curved version' of the decomposition in \cite{DZ} and
its construction is a variant of that given in \cite{IDNLS,IDNLS2}.
Here we just state what is necessary to understand the present paper.
The leading parts are $R$ and $\bar{R}$ and the limit of $R(z)$
and $\bar{R}(z)$ as $z$ tends to a saddle point along an arc coincides
with that of $\rho(z)$ and $\bar{\rho}(z)$ respectively. The other
parts, $h_{I},h_{II},\bar{h}_{I}$ and $\bar{h}_{II}$ are small in
the following sense. First, $|e^{-2\varphi}h_{I}|$ and $|e^{2\varphi}\bar{h}_{I}|$
are estimated by any negative power of $t$. Second, $h_{II}$ and
$R$ (resp. $\bar{h}_{II}$ and $\bar{R}$ ) can be analytically continued
to $\left\{ \text{\text{Re\,}}\varphi>0\right\} $ (resp. $\left\{ \text{\text{Re\,}}\varphi<0\right\} $)
and $|e^{-2\varphi}h_{II}|$ (resp. $|e^{2\varphi}\bar{h}_{II}|$
) is estimated by any negative power of $t$ on $L$ (resp. $L'$).
Lastly, $|e^{-2\varphi}R|$ (resp. $|e^{2\varphi}\bar{R|}$) decay
exponentially on $L$ (resp. on $L'$) except in small neighborhoods
of the saddle points. 

\begin{figure}
\includegraphics[width=12cm]{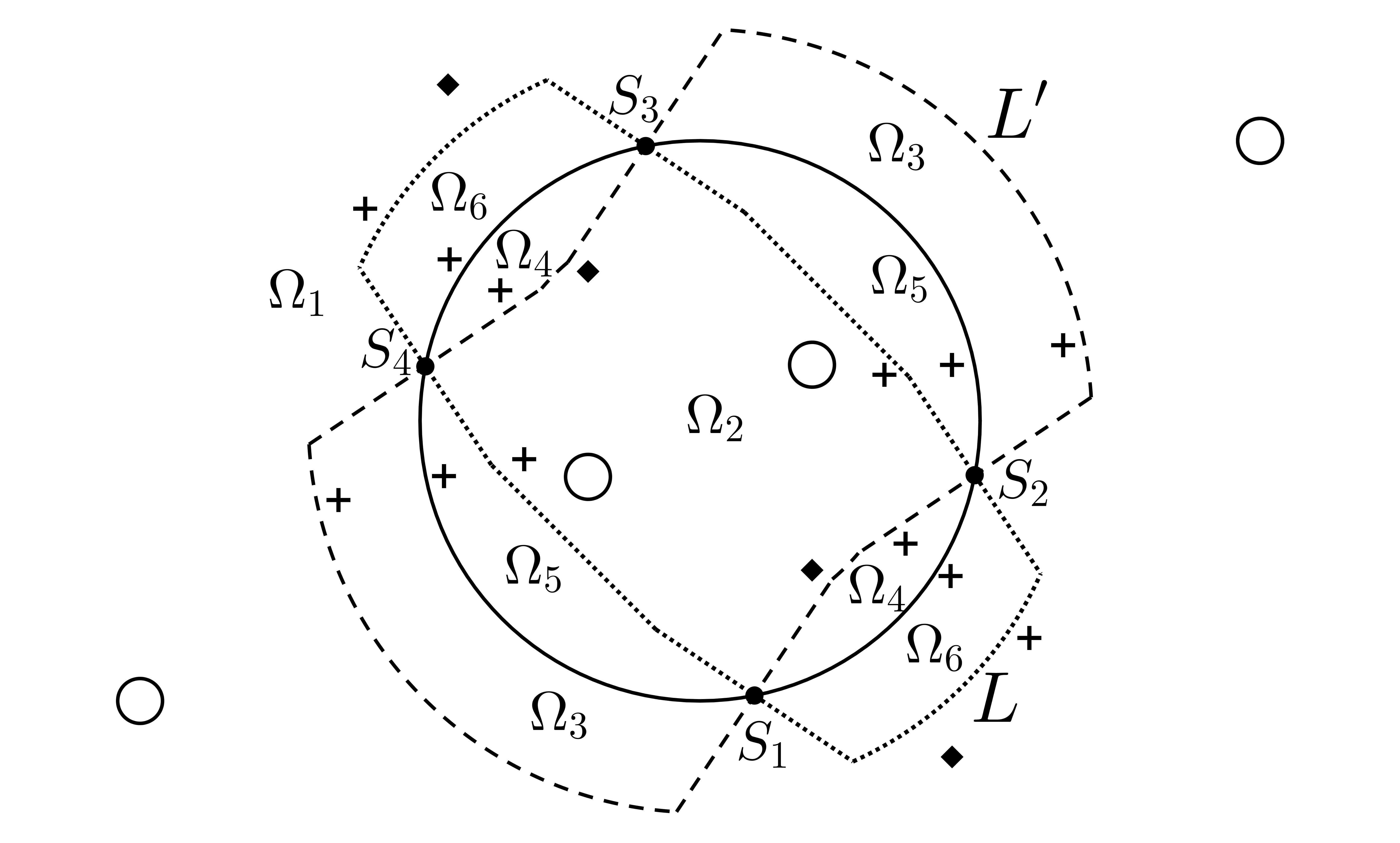}\label{fig:Sigma}\caption{the contour $\Sigma$}
\end{figure}

We introduce the following matrices: 
\begin{align*}
 & b_{+}^{0}=\delta_{+}^{\mathrm{ad}\sigma_{3}}e^{-(\varphi+d)\mathrm{ad}\sigma_{3}}\begin{bmatrix}1 & \quad-h_{I}\\
0 & \quad1
\end{bmatrix},\; & b_{+}^{a}=\delta_{+}^{\mathrm{ad}\sigma_{3}}e^{-(\varphi+d)\mathrm{ad}\sigma_{3}}\begin{bmatrix}1\; & \;-h_{II}-R\\
0\; & \;1
\end{bmatrix},\\
 & b_{-}^{0}=\delta_{-}^{\mathrm{ad}\sigma_{3}}e^{-(\varphi+d)\mathrm{ad}\sigma_{3}}\begin{bmatrix}1 & \quad0\\
\bar{h}_{I} & \quad1
\end{bmatrix},\; & b_{-}^{a}=\delta_{-}^{\mathrm{ad}\sigma_{3}}e^{-(\varphi+d)\mathrm{ad}\sigma_{3}}\begin{bmatrix}1\; & \;0\\
\bar{h}_{II}+\bar{R}\; & \;1
\end{bmatrix}.
\end{align*}
Notice that $b_{\pm}^{a}$ can be analytically continued to $\left\{ \pm\mathrm{Re\,}\varphi>0\right\} $.
By \eqref{eq:unifiedexpression}, we have 
\begin{equation}
\tilde{v}=(b_{-}^{a})^{-1}(b_{-}^{\circ})^{-1}b_{+}^{\circ}b_{+}^{a}\label{eq:unifiedexpression2}
\end{equation}
on any of the arcs. Set \ensuremath{b_{-}^{\sharp}=I,b_{-}^{0},b_{-}^{a}}
 and \ensuremath{b_{+}^{\sharp}=b_{+}^{a},b_{+}^{0},I}
 on $L,\tilde{C},L'$ respectively. On $\Sigma\setminus C^{\circ}$,
we set
\[
v^{\sharp}=v^{\sharp}(z)=(b_{-}^{\sharp})^{-1}b_{+}^{\sharp}.
\]
We have $v^{\sharp}=b_{+}^{a},(b_{-}^{0})^{-1}b_{+}^{0},(b_{-}^{a})^{-1}$
on $L,\tilde{C},L'$ respectively. On the remaining part $C^{\circ}$,
let $v^{\sharp}$ be equal to the jump matrices in \eqref{eq:exponentially1}
and \eqref{eq:exponentially2}. As a replacement for $\tilde{m}$
in Proposition \ref{prop:onlyonepole}, or rather $\tilde{m}\Delta^{-1}$
in \eqref{eq:mdelta^-1_1}, \eqref{eq:exponentially1}-\eqref{eq:DeltaRHP4},
we define a new unknown matrix $m^{\sharp}$ by 
\begin{align}
m^{\sharp} & =\tilde{m}\Delta^{-1}, & z\in\Omega_{1}\cup\Omega_{2},\label{eq:msharpdef1}\\
 & =\tilde{m}\Delta^{-1}(b_{-}^{a})^{-1}, & z\in\Omega_{3}\cup\Omega_{4},\label{eq:msharpdef2}\\
 & =\tilde{m}\Delta^{-1}(b_{+}^{a})^{-1}, & z\in\Omega_{5}\cup\Omega_{6}.\label{eq:msharpdef3}
\end{align}
It is the unique solution to the Riemann-Hilbert problem 
\begin{align}
 & m_{+}^{\sharp}(z)=m_{-}^{\sharp}(z)v^{\sharp}(z),\hspace{1.2em}z\in\Sigma,\label{eq:msharp1}\\
 & \mathrm{Res}(m^{\sharp};\pm z_{s})=\lim_{z\to\pm z_{s}}m^{\sharp}(z)\Delta(z_{s})I_{\mathrm{\mathrm{res}}}(z_{s})\Delta(z_{s})^{-1},\label{eq:msharp2}\\
 & \mathrm{Res}(m^{\sharp};\pm\bar{z}_{s}^{-1})=\lim_{z\to\pm\bar{z}_{s}^{-1}}m^{\sharp}(z)\Delta(\bar{z}_{s}^{-1})I_{\mathrm{\mathrm{res}}}(\bar{z}_{s}^{-1})\Delta(\bar{z}_{s}^{-1})^{-1},\label{eq:msharp3}\\
 & m^{\sharp}(z)\to I\hspace{1.2em}\mbox{as\;}z\to\infty.\label{eq:msharp4}
\end{align}
See \eqref{eq:20150723a}-\eqref{eq:20150723b} for concrete expressions
of matrices in  \eqref{eq:msharp2}-\eqref{eq:msharp3}. We shall
employ \ensuremath{w_{\pm}^{\sharp}=\pm(b_{\pm}^{\sharp}-I)},
 \ensuremath{w^{\sharp}=w_{+}^{\sharp}+w_{-}^{\sharp}}
. We have \ensuremath{v^{\sharp}=(I-w_{-}^{\sharp})^{-1}(I+w_{+}^{\sharp})=(I+w_{-}^{\sharp})(I+w_{+}^{\sharp}).}
 Notice that $v^{\sharp}(z)$ is defined on $C^{\circ}$ in terms
of $I_{\mathrm{exp}}^{\pm}(z;z_{j})$ and $I_{\mathrm{exp}}^{\pm}(z;\bar{z}_{j}^{-1})$.
It is exponentially close to $I$ on $C^{\circ}$ as $t\to\infty$. 

Let us derive a reconstruction formula in terms of $m_{21}^{\sharp}$.
Near $z=0$, we have $m_{21}^{\sharp}(z)=\tilde{m}_{21}(z)\delta(z)^{-1}$,
$\tilde{m}_{21}(z)=\hat{m}_{21}(z)\delta(0)T(z)^{-1}T(\infty)^{-1}$,
$\hat{m}_{21}(z)=m_{21}(z)$. Therefore we obtain 
\[
m_{21}(z)=\tilde{m}{}_{21}(z)\delta(0)^{-1}T(\infty)T(z)=m_{21}^{\sharp}(z)\delta(0)^{-1}\delta(z)T(\infty)T(z).
\]
Set $p_{s}=T(0)T(\infty)$. Then we have 
\begin{equation}
p_{s}=\prod_{k>s}\,z_{k}^{2}\bar{z}_{k}^{-2},\quad|p_{s}|=1.\label{eq:p_s}
\end{equation}
Since $\delta(z)$ and $T(z)$ are even functions, we get 
\begin{equation}
R_{n}(t)=-\left.\frac{\,d}{\,dz}m(z)_{21}\right|_{z=0}=-T(0)T(\infty)\frac{\,dm_{21}^{\sharp}}{\,dz}(0)=-p_{s}\frac{\,dm_{21}^{\sharp}}{\,dz}(0).\label{eq:Rnreconst}
\end{equation}

\section{Modified Cauchy kernel and the Beals-Coifman formula\label{sec:Modified-Cauchy-kernel}}

Set 
\[
g(\zeta,z;a)=\frac{(z-a)(z+a)}{(\zeta-a)(\zeta+a)(\zeta-z)}=\frac{1}{\zeta-z}-\frac{z+a}{2a(\zeta-a)}+\frac{z-a}{2a(\zeta+a)}.
\]
We have $g(\zeta,\pm a;a)=0$. Next set $h_{1}(\zeta,z)=g(\zeta,z;\bar{z}_{s}^{-1}),\,h_{2}=g(\zeta,z;z_{s})$.
We have $h_{1}(\zeta,\pm\bar{z}_{s}^{-1})=h_{2}(\zeta,\pm z_{s})=0.$
We introduce the modified Cauchy kernel 
\[
\Omega(\zeta,z)=\frac{1}{2\pi i}\begin{bmatrix}h_{1}(\zeta,z) & 0\\
0 & h_{2}(\zeta,z)
\end{bmatrix}.
\]
For any $2\times2$ matrix $f=f(z)$, the second columns of $(f\Omega)(\pm z_{s})$
and the first columns of $(f\Omega)(\pm\bar{z}_{s}^{-1})$ are zero
for any $\zeta$. We define the modified Cauchy operator $C^{\Omega}$
by $(C^{\Omega}f)(z)=\int_{\Sigma}f(\zeta)\Omega(\zeta,z)\,d\zeta$.
We have 
\begin{eqnarray}
(C^{\Omega}f)(\zeta,\pm z_{s})=\begin{bmatrix}* & 0\\*
 & 0
\end{bmatrix},\label{eq:COmegapolecondition1}\\
(C^{\Omega}f)(\zeta,\pm\bar{z}_{s}^{-1})=\begin{bmatrix}0 & *\\
0 & *
\end{bmatrix}.\label{eq:COmegapolecondition2}
\end{eqnarray}
 The boundary values of $(C^{\Omega}f)(z)$ on $\Sigma$ are denoted
by 
\[
(C_{\pm}^{\Omega}f)(z)=\int_{\Sigma}f(\zeta)\Omega(\zeta,z_{\pm})\,d\zeta=\lim_{\genfrac{}{}{0pt}{}{z'\to z}{z'\in\{\pm\textrm{\scriptsize-side of }\Sigma\}}}\int_{\Sigma}f(\zeta)\Omega(\zeta,z')\,d\zeta,\;z\in\Sigma.
\]
We have $C_{+}^{\Omega}-C_{-}^{\Omega}=\mathrm{identity}$. We introduce
the modified Beals-Coifman operator \ensuremath{C_{w^{\sharp}}^{\Omega}\colon L^{2}(\Sigma)\to L^{2}(\Sigma)}
 by 
\begin{equation}
C_{w^{\sharp}}^{\Omega}f=C_{+}^{\Omega}(fw_{-}^{\sharp})+C_{-}^{\Omega}(fw_{+}^{\sharp})\label{eq:integopdef}
\end{equation}
for a $2\times2$ matrix-valued function $f$. 

Let $\mu^{\sharp}$ be the solution to the equation 
\begin{equation}
\mu^{\sharp}=m_{0}^{\sharp}+C_{w^{\sharp}}^{\Omega}\mu^{\sharp}.
\end{equation}
Here $m_{0}^{\sharp}$ is obtained from $m_{0}$ by replacing $C_{1}(0)$
with $\delta(0)\delta(z_{s})^{-2}T(z_{s})^{-2}C_{s}(0)$ in Proposition
\ref{prop:m_0}. See \eqref{eq:20150723a} and \eqref{eq:20150723b}.
We have $\mu^{\sharp}=(1-C_{w^{\sharp}}^{\Omega})^{-1}m_{0}^{\sharp}$
(the resolvent exists as is proved in the next section), and 
\begin{equation}
m^{\sharp}(z)=m_{0}^{\sharp}(z)+\int_{\Sigma}\mu^{\sharp}(\zeta)w^{\sharp}(\zeta)\Omega(\zeta,z),\;z\in\mathbb{C}\setminus\left(\Sigma\cup\left\{ \pm z_{s},\pm\bar{z}_{s}^{-1}\right\} \right)\label{eq:msharpdef}
\end{equation}
is the unique solution to the Riemann-Hilbert problem \eqref{eq:msharp1}-\eqref{eq:msharp4}.
Indeed, the pole conditions \eqref{eq:msharp2}-\eqref{eq:msharp3}
follow from \eqref{eq:20150723a}, \eqref{eq:20150723b} and \eqref{eq:COmegapolecondition1}-\eqref{eq:COmegapolecondition2}.
On the other hand, \eqref{eq:msharp1} is satisfied because 
\begin{align*}
m_{+}^{\sharp} & =m_{0}^{\sharp}+C_{+}^{\Omega}(\mu^{\sharp}w^{\sharp})=m_{0}^{\sharp}+C_{+}^{\Omega}(\mu^{\sharp}w_{+}^{\sharp})+C_{+}^{\Omega}(\mu^{\sharp}w_{-}^{\sharp})\\
 & =m_{0}^{\sharp}+\mu^{\sharp}w_{+}^{\sharp}+C_{-}^{\Omega}(\mu^{\sharp}w_{+}^{\sharp})+C_{+}^{\Omega}(\mu^{\sharp}w_{-}^{\sharp})\\
 & =m_{0}^{\sharp}+C_{w^{\sharp}}^{\Omega}(\mu^{\sharp})+\mu^{\sharp}w_{+}^{\sharp}=\mu^{\sharp}+\mu^{\sharp}w_{+}^{\sharp}=\mu^{\sharp}b_{+}^{\sharp}
\end{align*}
and similarly $m_{-}^{\sharp}=\mu^{\sharp}b_{-}^{\sharp}$. By substituting
\eqref{eq:msharpdef} into \eqref{eq:Rnreconst}, we find that
\begin{align}
R_{n}(t) & =p_{s}\biggl[\mathrm{BS}\left(n,t;z_{s},\delta(0)\delta(z_{s})^{-2}T(z_{s})^{-2}C_{s}(0)\right)+E_{n}(t)\biggr],\nonumber \\
E_{n}(t) & =-\int_{\Sigma}z^{-2}\left[\bigl((1-C_{w^{\sharp}}^{\Omega})^{-1}m_{0}^{\sharp}\bigr)(z)w^{\sharp}(z)\right]_{21}\frac{\,dz}{2\pi i}\nonumber \\
 & \quad\,+\int_{\Sigma}\frac{1}{(z-\bar{z}_{s}^{-1})(z+\bar{z}_{s}^{-1})}\left[\bigl((1-C_{w^{\sharp}}^{\Omega})^{-1}m_{0}^{\sharp}\bigr)(z)w^{\sharp}(z)\right]_{21}\frac{\,dz}{2\pi i}.\label{eq:BS+En}
\end{align}

\section{Estimates\label{sec:Estimates}}

In this section we prove the existence of the resolvent $(1-C_{w^{\sharp}}^{\Omega})^{-1}\colon L^{2}(\Sigma)\to L^{2}(\Sigma)$
and give an estimate on the error term $E_{n}(t)$ in \eqref{eq:BS+En}.
Let $C_{\pm}$ be the boundary values of the usual Cauchy integrals.
We introduce the Beals-Coifman operator \ensuremath{C_{w^{\sharp}}\colon L^{2}(\Sigma)\to L^{2}(\Sigma)}
 by 
\[
C_{w^{\sharp}}f=C_{+}(fw_{-}^{\sharp})+C_{-}(fw_{+}^{\sharp})
\]
for a $2\times2$ matrix-valued function $f$. 
\begin{prop}
\label{prop:resolvents}The resolvents $(1-C_{w^{\sharp}})^{-1},(1-C_{w^{\sharp}}^{\Omega})^{-1}\colon L^{2}(\Sigma)\to L^{2}(\Sigma)$
exist for any sufficiently large $t$.
\end{prop}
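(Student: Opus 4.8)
The plan is to prove both resolvent existence statements together, reducing the modified operator $C^{\Omega}_{w^{\sharp}}$ to the ordinary Beals--Coifman operator $C_{w^{\sharp}}$ and then establishing invertibility of $1-C_{w^{\sharp}}$ by a standard nonlinear-steepest-descent argument. First I would observe that $C^{\Omega}_{w^{\sharp}}$ differs from $C_{w^{\sharp}}$ only through the modified Cauchy kernel $\Omega(\zeta,z)$, which by the partial-fraction decomposition of $g(\zeta,z;a)$ equals the ordinary Cauchy kernel plus a finite-rank correction: integration against $\Omega$ produces, besides $\int_{\Sigma}f(\zeta)(\zeta-z)^{-1}\,d\zeta/(2\pi i)$, only terms involving the point evaluations at $\pm z_s,\pm\bar z_s^{-1}$, with coefficients depending rationally (hence boundedly on $\Sigma$) on $z$. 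Thus $C^{\Omega}_{\pm}=C_{\pm}+K_{\pm}$ with $K_{\pm}$ bounded of finite rank, so $C^{\Omega}_{w^{\sharp}}=C_{w^{\sharp}}+K$ where $K$ is a bounded finite-rank operator on $L^2(\Sigma)$ whose norm is controlled by $\|w^{\sharp}\|_{L^\infty\cap L^2(\Sigma)}$. By the analytic Fredholm theorem, once $1-C_{w^{\sharp}}$ is invertible it suffices that $1-C^{\Omega}_{w^{\sharp}}$ be injective; injectivity follows from the unique solvability of the Riemann--Hilbert problem \eqref{eq:msharp1}--\eqref{eq:msharp4}, which in turn reduces to that of the bright one-soliton problem of Proposition~\ref{prop:m_0} via the vanishing-lemma/determinant-one argument (the relevant symmetry having been restored in \eqref{eq:20150723a}--\eqref{eq:20150723b}).

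Next I would prove invertibility of $1-C_{w^{\sharp}}$ for large $t$ by the usual large-$t$ smallness estimates on $w^{\sharp}=w^{\sharp}_++w^{\sharp}_-$, piece by piece over $\Sigma=\tilde C\cup C^{\circ}\cup L\cup L'$. On $C^{\circ}$ the jump matrices $I_{\mathrm{exp}}^{\pm}$ are exponentially close to $I$ as $t\to\infty$, so the corresponding contribution to $\|w^{\sharp}\|$ is $O(e^{-ct})$. On $L$ and $L'$, by the decomposition $\rho=R+h_I+h_{II}$ and its conjugate, the $h_I$-parts contribute factors bounded by any negative power of $t$ (from $|e^{-2\varphi}h_I|$, $|e^{2\varphi}\bar h_I|$), the $h_{II}$-parts likewise on $L$ resp. $L'$, and the $R$-parts decay exponentially away from the saddle points $S_k$; near each $S_k$ one uses the standard $t^{-1/2}$ gain from the quadratic behaviour of $\varphi$ together with Hölder continuity of $R$. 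On $\tilde C$, the middle-arc jump matrices $(b^0_-)^{-1}b^0_+$ involve only the $h_I$-type small quantities and the exponentially decaying factors, except again near the $S_k$ where one picks up $O(t^{-1/2})$. Assembling these, $\|C_{w^{\sharp}}\|_{L^2\to L^2}\le \|C_+\| \,\|w^{\sharp}_-\|_{L^\infty}+\|C_-\|\,\|w^{\sharp}_+\|_{L^\infty}=O(t^{-1/2})\to0$, so $1-C_{w^{\sharp}}$ is invertible by Neumann series for all sufficiently large $t$, with $\|(1-C_{w^{\sharp}})^{-1}\|$ uniformly bounded; this is the core quantitative input and essentially a repetition of the estimates in \cite{IDNLS,IDNLS2} adapted to the focusing jump matrix, as noted in Remark~\ref{rem: whatisdifferent}.

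Finally, combining the two steps: $1-C^{\Omega}_{w^{\sharp}}=(1-C_{w^{\sharp}})-K$ with $1-C_{w^{\sharp}}$ boundedly invertible and $K$ a bounded finite-rank operator, so $1-C^{\Omega}_{w^{\sharp}}=(1-C_{w^{\sharp}})\bigl(1-(1-C_{w^{\sharp}})^{-1}K\bigr)$ is invertible iff $1-(1-C_{w^{\sharp}})^{-1}K$ is, and the latter is a compact (finite-rank) perturbation of the identity, hence invertible iff injective. Injectivity is the solvability of the RHP \eqref{eq:msharp1}--\eqref{eq:msharp4}, guaranteed as above. The main obstacle I anticipate is not the abstract Fredholm bookkeeping but the careful verification of the saddle-point contributions on $\tilde C\cup L\cup L'$ — i.e.\ checking that the ``curved'' decomposition $\rho=R+h_I+h_{II}$ inherited from \cite{DZ,IDNLS,IDNLS2} still yields the needed $L^2$ and $L^\infty$ bounds after the extra conjugations by $\Delta(z)$ and the $e^{d\,\mathrm{ad}\sigma_3}$ factor, which is the point flagged in Remark~\ref{rem: whatisdifferent}; the finite-rank correction from $\Omega$ is harmless precisely because its kernel is smooth away from the four isolated points $\pm z_s,\pm\bar z_s^{-1}$, which lie off $\Sigma$.
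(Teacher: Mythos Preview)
Your reduction of $C^{\Omega}_{w^{\sharp}}$ to $C_{w^{\sharp}}$ via a finite-rank perturbation and Fredholm theory is valid, though more elaborate than necessary: the paper simply observes that the difference $C^{\Omega}_{w^{\sharp}}-C_{w^{\sharp}}$ consists of integrals of $fw^{\sharp}_{\pm}$ against smooth kernels $(\zeta\pm z_s)^{-1},(\zeta\pm\bar z_s^{-1})^{-1}$ (with $\pm z_s,\pm\bar z_s^{-1}\notin\Sigma$), so its $L^2\to L^2$ norm is controlled by $\|w^{\sharp}_{\pm}\|_{L^2(\Sigma)}=O(t^{-1/4})$ and hence tends to zero. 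No injectivity or Fredholm argument is needed for this step.

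The genuine gap is in your second step. Your claim that $\|w^{\sharp}_{\pm}\|_{L^{\infty}}=O(t^{-1/2})$, and hence that $1-C_{w^{\sharp}}$ is invertible by a Neumann series, is false in the focusing case. On $L$ (resp.\ $L'$) the entry of $w^{\sharp}_{+}$ (resp.\ $w^{\sharp}_{-}$) contains $e^{\mp 2\varphi}R$, and at a saddle point $S_j$ one has $\mathrm{Re}\,\varphi(S_j)=0$ and $R(S_j)=\rho(S_j)$, so $|w^{\sharp}_{\pm}(S_j)|$ is of order $|r(S_j)|$ (or $|r(S_j)|/(1+|r(S_j)|^2)$), an $O(1)$ quantity that does not decay in $t$. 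The ``$t^{-1/2}$ gain from the quadratic behaviour of $\varphi$'' is an $L^2$ statement ($\|w^{\sharp}_{\pm}\|_{L^2}=O(t^{-1/4})$), not an $L^{\infty}$ one, and it does not by itself bound $\|C_{w^{\sharp}}\|_{L^2\to L^2}$. In the defocusing problem one has $|r|<1$, which makes the local model operator a strict contraction and the Neumann series in \cite[(3.94)]{DZ}, \cite[p.~797]{IDNLS} converge; here $|r(S_j)|$ may exceed $1$, and the paper explicitly notes that this Neumann-series route is not available. Instead one must localize to each saddle point, rescale, and invoke the existence of the resolvent for the model problem on $\mathbb{R}$ with jump
\[
V=e^{-(iz^{2}/4)\,\mathrm{ad}\,\sigma_{3}}
\begin{bmatrix}1 & -\bar r(S_j)\\ -r(S_j) & 1+|r(S_j)|^{2}\end{bmatrix},
\]
which follows from the positive-definiteness argument of \cite[Lemma~5.9]{DP} (together with the orientation-reversal trick via \cite[Proposition~2.8]{DZ_Sobolev}). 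This is the missing idea; without it your argument does not close.
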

\begin{proof}
The existence of $(1-C_{w^{\sharp}}^{\Omega})^{-1}$ follows from
that of $(1-C_{w^{\sharp}})^{-1}$, because the difference $C_{w^{\sharp}}^{\Omega}-C_{w^{\sharp}}$
is infinitely small for large $t$. The proof is as follows. We see
that $\left(C_{w^{\sharp}}^{\Omega}-C_{w^{\sharp}}\right)f$ consists
of terms like 
\[
\int_{\Sigma}\frac{\mathrm{const.}fw_{\pm}^{\sharp}}{\zeta\pm z_{s}}d\zeta,\,\int_{\Sigma}\frac{\mathrm{const.}fw_{\pm}^{\sharp}}{\zeta\pm\bar{z}_{s}^{-1}}d\zeta
\]
and that $\pm z_{s},\pm\bar{z}_{s}^{-1}\not\not\in\Sigma$. Since
the $L^{2}(\Sigma)$-norm of $w_{\pm}^{\sharp}$ is $O(t^{-1/4})$
(\cite[\S7.2]{IDNLS}), the $L^{\infty}(\Sigma)$-norm of $\left(C_{w^{\sharp}}^{\Omega}-C_{w^{\sharp}}\right)f$
is $O(t^{-1/4})\times\|f\|_{L^{2}(\Sigma)}$. Since $\Sigma$ is bounded,
the $L^{2}(\Sigma)$-norm of $\left(C_{w^{\sharp}}^{\Omega}-C_{w^{\sharp}}\right)f$
is also $O(t^{-1/4})\times\|f\|_{L^{2}(\Sigma)}$.

Next we show the existence of $(1-C_{w^{\sharp}})^{-1}$. Since $\sup_{C^{\circ}}|w_{\pm}^{\sharp}|=O(t^{-N})$
for any $N$, we have only to prove that $(1-C_{w^{\sharp}})^{-1}\colon L^{2}(\Sigma\setminus C^{\circ})\to L^{2}(\Sigma\setminus C^{\circ})$
exists. Here we abuse the notation $C_{w^{\sharp}}$ to mean an operator
on $L^{2}(\Sigma\setminus C^{\circ})$. Then the necessary argument
is similar to \cite[\S9, \S11]{IDNLS}. We encounter the matrix 
\[
V=e^{-((iz^{2})/4)\mathrm{ad}\,\sigma_{3}}\begin{bmatrix}1 & -\bar{r}(S_{j})\\
-r(S_{j}) & 1+|r(S_{j})|^{2}
\end{bmatrix}=\begin{bmatrix}1 & -\bar{r}(S_{j})e^{-iz^{2}/2}\\
-r(S_{j})e^{iz^{2}/2} & 1+|r(S_{j})|^{2}
\end{bmatrix}
\]
instead of $v^{e,\phi}(z)$ at the bottom of \cite[p.796]{IDNLS}.
We have only to prove the existence of the resolvent of the Beals-Coifman
operator in $L^{2}(\mathbb{R})$ associated with it. Now $|r(S_{j})|$
is not necessarily less than 1. The simple argument based on the Neumann
series as in \cite[(3.94)]{DZ} and \cite[p.797]{IDNLS} is not valid.
We can resort to \cite[Lemma 5.9]{DP} instead. It implies the existence
of $(1-C_{v^{\mathrm{DP}}})^{-1}$ in $L^{2}(\mathbb{R})$, where
\[
v^{\mathrm{DP}}=v^{\mathrm{DP}}(z)=\begin{bmatrix}1+|a|^{2} & \bar{a}e^{i\theta}\\
ae^{-i\theta} & 1
\end{bmatrix},\,\theta=-z^{2}/2,\,a\colon\text{const.},
\]
and $C_{v^{\mathrm{DP}}}$ is the Beals-Coifman operator associated
with any factorization of $v^{\mathrm{DP}}$. Let $\mathbb{R}^{\mathrm{rev}}$
be the contour obtained by reversing the orientation of $\mathbb{R}$.
By \cite[Proposition 2.8]{DZ_Sobolev}, we have 
\[
C_{v^{\mathrm{DP}}}=C_{\left(v^{\mathrm{DP}}\right)^{-1}}\quad\text{in}\,L^{2}(\mathbb{R})=L^{2}(\mathbb{R}^{\mathrm{rev}}).
\]
Notice that $\mathbb{R}^{\mathrm{rev}}$ can be identified with the
conventionally oriented real axis of another copy of of $\mathbb{C}$
via $\mathbb{C}_{z}\to\mathbb{C}_{w},z\mapsto w:=-z$ and that we
have $dz/(z-\xi)=dw/(w-\zeta),\zeta=-\xi$. Since $v^{\mathrm{DP}}(\xi)=v^{\mathrm{DP}}(\zeta)$,
the resolvent $(1-C_{\left(v^{\mathrm{DP}}\right)^{-1}})^{-1}$ exists
in $L^{2}(\mathbb{R})$. Notice that $\left(v^{\mathrm{DP}}\right)^{-1}$
has the same form as $V$. We have proved the existence of the resolvent
of the Beals-Coifman operator associated with $V$.
\end{proof}
Next we can show $E_{n}(t)=O(t^{-1/2})$. The following proposition
is a part of Theorem \ref{thm:main1}.
\begin{prop}
\label{prop: soliton}In the region $\mathrm{(-2<)tw}(z_{s})-d\leq n/t\leq\mathrm{tw}(z_{s})+d(<2)$,
where $d$ is sufficiently small, the solution $R_{n}(t)$ differs
from a soliton only by $O(t^{-1/2})$: 
\[
R_{n}(t)=\mathrm{BS}\left(n,t;z_{s},\delta(0)\delta(z_{s})^{-2}p_{s}T(z_{s})^{-2}C_{s}(0)\right)+O(t^{-1/2}).
\]
\end{prop}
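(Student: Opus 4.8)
The plan is to derive the estimate $E_n(t)=O(t^{-1/2})$ from the integral representation \eqref{eq:BS+En}, since then the factor $p_s$ (with $|p_s|=1$) combines with the bright soliton via Lemma \ref{lem:modulus1} to yield exactly the stated formula $R_n(t)=\mathrm{BS}(n,t;z_s,\delta(0)\delta(z_s)^{-2}p_sT(z_s)^{-2}C_s(0))+O(t^{-1/2})$. The starting point is Proposition \ref{prop:resolvents}, which guarantees that $(1-C_{w^\sharp}^\Omega)^{-1}$ exists and is uniformly bounded on $L^2(\Sigma)$ for large $t$; combined with the fact that $\|m_0^\sharp\|$ is bounded independently of $t$ (it solves the explicit soliton RHP), this gives $\|(1-C_{w^\sharp}^\Omega)^{-1}m_0^\sharp\|_{L^2(\Sigma)}=O(1)$.

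First I would split $\Sigma=\tilde C\cup C^\circ\cup L\cup L'$ and treat each piece. On $C^\circ$ the jump matrices $I_{\mathrm{exp}}^\pm$ are exponentially close to $I$, so $w^\sharp=O(t^{-N})$ there and the contribution to $E_n(t)$ is negligible. On $\tilde C$ one uses $w^\sharp=(b_-^0)^{-1}b_+^0-I$, which is built from $h_I,\bar h_I$ together with the conjugating factors; by the properties of the decomposition $\rho=R+h_I+h_{II}$ stated in Section \ref{sec:newcontour}, $|e^{-2\varphi}h_I|$ and $|e^{2\varphi}\bar h_I|$ are bounded by any negative power of $t$, so after pairing with the exponential factors in $e^{-(\varphi+d)\mathrm{ad}\sigma_3}$ these terms are also $O(t^{-N})$. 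On $L$ and $L'$ the matrix $w^\sharp$ equals $b_+^a-I$ resp. $(b_-^a)^{-1}-I$, whose entries involve $e^{-2\varphi}(h_{II}+R)$ resp. $e^{2\varphi}(\bar h_{II}+\bar R)$; the $h_{II}$ parts are again $O(t^{-N})$, while $|e^{-2\varphi}R|$ (resp. $|e^{2\varphi}\bar R|$) decays exponentially on $L$ (resp. $L'$) away from the saddle points. Thus the only genuine contributions come from small neighborhoods of the four saddle points $S_1,\dots,S_4$.

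Near each saddle point the standard parabolic-cylinder analysis of \cite{DZ}, in the curved form developed in \cite{IDNLS,IDNLS2}, applies: after the local change of variable that straightens $\varphi$ to a quadratic, the relevant model is the constant-coefficient jump matrix $V$ displayed in the proof of Proposition \ref{prop:resolvents}, and the local contribution to $E_n(t)$ is $O(t^{-1/2})$ with the explicit coefficient coming from the parabolic cylinder functions. The extra ingredients compared to the solitonless estimate of \cite{IDNLS}—namely the shift $\varphi\mapsto\varphi+d$ and the conjugation by $e^{d\,\mathrm{ad}\sigma_3}$, together with the replacement of $r$ by $r\bar T^2$ and of $1-|r|^2$ by $1+|r|^2$—are handled exactly as indicated in Remark \ref{rem: whatisdifferent}, using the device of \cite[(18)]{IDNLS2}; since $|T|=1$ on $\tilde C$ and $d$ is a fixed real constant, none of these changes the order of the error. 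One must also account for the presence of the poles $\pm z_s,\pm\bar z_s^{-1}$, which sit in $\Omega_1\cup\Omega_2$ away from $\Sigma$; the modified Cauchy kernel $\Omega(\zeta,z)$ differs from the ordinary one by rational factors with poles only at these points, so on $\Sigma$ (a fixed bounded contour bounded away from the poles) the operator $C_{w^\sharp}^\Omega$ differs from $C_{w^\sharp}$ by an operator of norm $O(t^{-1/4})$, as already recorded in Proposition \ref{prop:resolvents}; this perturbation does not affect the leading $t^{-1/2}$ analysis and its own contribution is absorbed into the error.

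The main obstacle is the uniformity of all estimates over the region $|n|\le(2-V_0)t$: as $n/t$ ranges over $(-2+V_0,2-V_0)\setminus[\mathrm{tw}(z_s)-d,\mathrm{tw}(z_s)+d]^c$ the saddle points $S_k=S_k(n/t)$ move along the unit circle but stay uniformly separated (this is where $V_0>0$ is used, keeping $S_1,S_2$ away from $S_4,S_1$ etc.), and the parameter $d$ is fixed so that the relevant eigenvalue $z_s$ stays bounded away from the saddle points; one has to check that the constants in the parabolic-cylinder estimates, the decay rates of $e^{\mp2\varphi}R$ on $L,L'$, and the operator-norm bounds for the resolvent are all uniform in $n/t$ on this set. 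This is exactly the uniformity already established in \cite{IDNLS} for the defocusing case, and the modifications above do not disturb it; so the estimate $E_n(t)=O(t^{-1/2})$ holds uniformly, which completes the proof.
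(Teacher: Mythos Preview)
Your proposal is correct and follows essentially the same route as the paper: combine \eqref{eq:BS+En} with Lemma~\ref{lem:modulus1} to absorb $p_s$ into the norming constant, then show $E_n(t)=O(t^{-1/2})$ by splitting $\Sigma$ and invoking the scaling/parabolic-cylinder analysis of \cite{IDNLS} near the saddle points, with the minor modifications recorded in Remark~\ref{rem: whatisdifferent}. The paper's own proof is terser (it simply points to \cite[pp.~798--799]{IDNLS} and notes that $\det v^\sharp=1$ keeps the Deift--Zhou machinery valid), but your expanded outline matches that reference faithfully.
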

\begin{proof}
Lemma \ref{lem:modulus1} implies that $p_{s}$ times a soliton is
still a soliton. We have 
\begin{eqnarray*}
p_{s}\mathrm{BS}\left(n,t;z_{s},\delta(0)\delta(z_{s})^{-2}T(z_{s})^{-2}C_{s}(0)\right)\\
\mathrm{=BS}\left(n,t;z_{s},\delta(0)\delta(z_{s})^{-2}p_{s}T(z_{s})^{-2}C_{s}(0)\right).
\end{eqnarray*}
By using a change of variables (scaling) as in \cite[pp.798-799]{IDNLS},
we can show that $E_{n}(t)$ in \eqref{eq:BS+En} satisfies $E_{n}(t)=O(t^{-1/2})$.
The calculations about the parabolic cylinder functions only need
minor changes. One important step relies on the fact that the determinant
of the jump matrix is equal to $1$ (\cite[pp.349-350]{DZ}), which
remains true for the focusing discrete NLS.  
\end{proof}
\begin{rem}
\label{rem:proofsolitonless}The proof of the solitonless case of
Theorem \ref{thm:main2} is almost the same as that of the defocusing
case (\cite{IDNLS}). The calculations about the parabolic cylinder
functions only need minor changes. One important step relies on the
fact that the determinant of the jump matrix is equal to $1$ (\cite[pp.349-350]{DZ}),
which remains true.  

One thing to be noted is that now $r(z)$ is accompanied by the factor
$\bar{T}(z)^{2}.$ It has no effect when we consider quantities involving
$|r(z)|^{2}$ because we have $|T(z)|=1$ on $C$. Although $C_{j}$'s
are affected by $T(z)$, $q_{j}$'s are not. See \cite[Theorem 3.1]{IDNLS}. 
\end{rem}

\section{\label{sec:Other-regions}Other regions}

In the preceding sections we considered the region $|n|<2t$. In this
section, we consider two other regions following \cite{IDNLS2}. The
equation \eqref{eq:IDNLS} is invariant under the reflection $n\mapsto-n$.
We may assume $n>0$ without loss of generality.

By using the argument of \cite{IDNLS2}, we can show the following
theorem.
\begin{thm}
\label{thm:main2}Assume that $\mathrm{tw}(z_{s})=2$ for some eigenvalue
$z_{s}$. Then in the region $2t-Mt^{1/3}<n<2t+M't^{1/3}\;(M>0)$,
we have
\[
R_{n}(t)=\mathrm{BS}\left(n,t;z_{s},p_{s}T(z_{s})^{-2}C_{s}(0)\right)+O(t^{-1/3})\quad\text{as}\;t\to\infty.
\]
In the solitonless case, i.e. if $\mathrm{tw}(z_{j})\ne2$ for any
$j$, then the behavior is as follows: let $t_{0}$ be such that $\pi^{-1}\left[\arg r(e^{-\pi i/4})\overline{T(e^{-\pi i/4})}^{2}-2t_{0}\right]$
is an integer. Set $t'=t-t_{0}$, $p'=d+i(-4t'+\pi n)/4$, $\alpha'=[12t'/(6t'-n)]^{1/3}$,
$q'=-2^{-4/3}3^{1/3}(6t'-n)^{-1/3}(2t'-n)$ and $\hat{r}=r(e^{-\pi i/4})\overline{T(e^{-\pi i/4})}^{2}$.
Then we have 
\[
R_{n}(t)=\frac{e^{2p'-\pi i/4}\alpha'}{(3t')^{1/3}}u\Bigl(\frac{4q'}{3^{1/3}};\hat{r},-\hat{r},0\Bigr)+O(t'^{-2/3}).
\]
Here $u(s;\mathrm{p,q,r})$ is a solution of the Painlev\'e II equation
$u''(s)-su(s)-2u^{3}(s)=0$. Its parametrization is given in \cite{DZ}
(and is repeated in \cite{IDNLS2}). 
\end{thm}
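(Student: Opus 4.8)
The plan is to reduce Theorem \ref{thm:main2} to a local model problem near the degenerate saddle point sitting on the unit circle, mirroring the argument of \cite{IDNLS2} for the defocusing case but carrying along the new ingredients that distinguish the focusing equation: the replacement of $r$ by $r\bar T^2$, of $1-|r|^2$ by $1+|r|^2$, and of $\bar r$ by $-\bar r$, together with the extra scalar factors $p_s$ and $T(z_s)^{-2}$ coming from Propositions \ref{prop:blowup} and \ref{prop:gyakusu}. The starting point is the RHP of Proposition \ref{prop:onlyonepole} and its $\Delta$-conjugated form; when $\mathrm{tw}(z_s)\ge 2$ the saddle $z_s$ lies in $G_2$, so along the ray $n/t\to 2$ the four saddle points $S_k$ on $|z|=1$ coalesce in pairs. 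In the window $2t-Mt^{1/3}<n<2t+M't^{1/3}$ the dominant contribution is a confluence of two stationary points into a single degenerate critical point, and the local parametrix is governed not by parabolic cylinder functions but by the Painlev\'e II equation, exactly as in \cite{DZ} and \cite{IDNLS2}.

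First I would specialize the contour deformation of Sections \ref{sec:newcontour}--\ref{sec:Modified-Cauchy-kernel} to this region: open lenses along the steepest-descent paths emanating from the coalescing saddles, absorb the non-analytic remainders $h_I,h_{II},\bar h_I,\bar h_{II}$ into exponentially small errors, and push the small circles $C(\pm z_j,\varepsilon)$ for $j\neq s$ into regions where $\mathrm{Re}\,\varphi$ has a definite sign so their jumps are $O(t^{-N})$. The soliton pole quartet $\pm z_s,\pm\bar z_s^{-1}$ is handled by the same removable-singularity trick and contributes, after the $D_0$- and $\Delta$-conjugations, the norming constant $\delta(0)\delta(z_s)^{-2}T(z_s)^{-2}C_s(0)$; but here $\mathrm{tw}(z_s)=2$ forces $\delta(0)=1$ and $\delta(z_s)=1$ because the arcs $\arc{S_1S_2},\arc{S_3S_4}$ shrink to points, so the norming constant collapses to $p_s T(z_s)^{-2}C_s(0)$, which is why the statement has no $\delta$ factors. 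Next I would rescale: introduce the local variable through $p'=d+i(-4t'+\pi n)/4$, $\alpha'=[12t'/(6t'-n)]^{1/3}$, $q'=-2^{-4/3}3^{1/3}(6t'-n)^{-1/3}(2t'-n)$ as in the statement, so that $\varphi$ near the degenerate saddle takes the canonical cubic-plus-linear form $i(\frac{1}{3}\zeta^3+s\zeta)$ up to $O(t^{-1/3})$ corrections, with $s\propto q'$. The frozen jump data at the saddle becomes the constant matrix built from $\hat r=r(e^{-\pi i/4})\overline{T(e^{-\pi i/4})}^2$ and $-\hat r$, whence the Painlev\'e II parametrix $u(\,\cdot\,;\hat r,-\hat r,0)$; the third parameter vanishes because, as in \cite{DZ}, the relevant $R$-data at the second saddle is conjugate-related to the first. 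The reconstruction formula \eqref{eq:Rnreconst}, $R_n(t)=-p_s\,\frac{d}{dz}m^\sharp_{21}(0)$, then yields the claimed leading term $e^{2p'-\pi i/4}\alpha'(3t')^{-1/3}u(4q'/3^{1/3};\hat r,-\hat r,0)$ once the model solution is evaluated at $z=0$ and the scaling Jacobian is accounted for.

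For the error estimate I would follow \cite{IDNLS2}: after matching the local Painlev\'e II parametrix to the outer (soliton) parametrix on the boundaries of the coalescence disk, the residual jump is $I+O(t^{-1/3})$ in $L^2\cap L^\infty$, the small-norm theory gives a resolvent $(1-C_{w^\sharp}^\Omega)^{-1}$ (Proposition \ref{prop:resolvents} adapts verbatim, since it only used boundedness of $\Sigma$ and the $O(t^{-1/4})$-smallness of $w^\sharp$ away from the saddles, and near the cubic saddle the standard PII-region estimate $\|w^\sharp\|_{L^2}=O(t^{-1/6})$ still holds), and the contour integral in the analogue of \eqref{eq:BS+En} contributes $O(t'^{-2/3})$. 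In the solitonless case there is no pole quartet at all, $p_s=1$, and the same scaling produces the pure radiation term with the PII profile; the choice of $t_0$ making $\pi^{-1}[\arg(r(e^{-\pi i/4})\overline{T(e^{-\pi i/4})}^2)-2t_0]$ an integer is exactly the normalization needed so that the phase in $p'$ is the correct branch, precisely as in \cite[Theorem 3.1]{IDNLS2} with $r$ replaced by $r\bar T^2$. The main obstacle, as in Proposition \ref{prop:resolvents}, is that $|\hat r|=|r(e^{-\pi i/4})|$ need not be less than $1$ in the focusing case, so the construction and invertibility of the Painlev\'e II model RHP cannot rely on a Neumann series; I would instead invoke the vanishing-lemma/Zhou-type argument (as used via \cite[Lemma 5.9]{DP} in the previous proposition) to establish solvability of the PII parametrix for arbitrary reflection amplitude. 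Once that solvability and the attendant uniform bounds on the model solution are in hand, the rest is the routine confluent-saddle bookkeeping of \cite{DZ,IDNLS2} with the cosmetic sign and $\bar T^2$ modifications recorded in Remark \ref{rem: whatisdifferent}.
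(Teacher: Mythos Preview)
Your overall strategy---local Painlev\'e II model at the coalescing saddle, the $r\mapsto r\bar T^2$ and sign modifications of Remark~\ref{rem: whatisdifferent}, the $D_0$-conjugation producing the factor $p_s T(z_s)^{-2}$, the time shift $t_0$ to make $\hat r$ real, and the appeal to a \cite{DP}-type positive-definiteness argument in place of a Neumann series for the model resolvent---matches the paper's proof, which is itself essentially a pointer to \cite{IDNLS2} with these substitutions.

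One point needs correction. You explain the absence of $\delta$-factors in the soliton term by saying that $\mathrm{tw}(z_s)=2$ forces the arcs $\arc{S_1S_2},\arc{S_3S_4}$ to shrink to points, so that $\delta(0)=\delta(z_s)=1$. This is not the paper's reason, and as stated it is not right: the arcs depend on $n/t$, not on $z_s$; for $n<2t$ in the window they have length of order $t^{-1/3}$ (so $\delta(0)-1=O(t^{-1/3})$, not zero), and for $n>2t$ the saddle points have left $|z|=1$ altogether, so the $\delta$ of \eqref{eq:deltaintegral} is not even defined on that half of the region. The paper's explanation is that in this region the signature table of $\mathrm{Re}\,\varphi$ is already favorable, so the $\Delta(z)$-conjugation of Section~4 is simply not performed in the soliton case; one factors the jump and opens the lenses directly. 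That is why the norming constant is exactly $p_s T(z_s)^{-2}C_s(0)$ rather than one that merely happens to be close to it. (In the solitonless formula the scalar $d=\tfrac12\log\delta(0)$ is retained inside $p'$ as a bookkeeping device, handled ``simultaneously with $p'$,'' not set to zero.) Your argument would still land within the stated $O(t^{-1/3})$ error on the sub-window $n<2t$, but it does not cover $n>2t$ and is not the mechanism the paper uses. Relatedly, your claim that Proposition~\ref{prop:resolvents} ``adapts verbatim'' is slightly off: the model phase is now cubic, $-4i(z^3-\mathrm{const.}\,z)$, so the \cite{DP} lemma must be re-run with that phase; what carries over is the strict positive-definiteness of the jump, not the quadratic-phase computation.
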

\begin{proof}
In order to prove the soliton case, derive a variant of \eqref{eq:msharpdef}
adapted to this case. To estimate the integral, follow the argument
of \cite[\S8]{IDNLS2}. The calculation is somewhat different from
the previous one. This is because we apply a scaling of order $t^{-1/3}$
as in \cite[\S4]{IDNLS2}. Notice that values of $\delta(z)$ are
absent in the third argument of $\mathrm{BS}$. It is because the
signature table of the phase function is good enough from the beginning
and the $\Delta(z)$-conjugation is unnecessary. Indeed, in the proof
of \cite[Lemma 5.9]{DP}, what matters is the fact that the jump matrix
is strictly positive definite (\cite[Lemma 5.2]{DP}). 

The proof of the existence of resolvents must be modified because
the modulus of the reflection coefficient is not necessarily less
than $1$. We can employ a variant of Lemma 5.9 of \cite{DP}. We
replace the phase with $-4i(z^{3}-\textrm{const}.z)$. The change
of the signature table does not spoil the proof. It rather simplifies
the argument since one can employ a simpler factorization without
introducing the function $\delta$ of \cite[(5.12)]{DP}. Moreover,
what matters in the proof of \cite[Lemma 5.9]{DP} is the fact that
the jump matrix is strictly positive definite and phase functions
need not to be quadratic. 

In the solitonless case, we have to modify the argument of \cite{IDNLS2}
slightly. The quantity $d=\frac{1}{2}\log\delta(0)$ can be dealt
with simulaneously with $p'.$ 

Next, notice that $\begin{bmatrix}1-|r|^{2} & -\bar{r}\\
r & 1
\end{bmatrix}$ (in the expression of $v$), $\begin{bmatrix}1 & 0\\
r & 1
\end{bmatrix}$ (in the expression of $J_{23}$) and $\begin{bmatrix}1 & -\bar{r}\\
0 & 1
\end{bmatrix}$ (in the expression of $J_{67}$) in \cite{IDNLS2} must be replaced
with $\begin{bmatrix}1+|r|^{2} & -\bar{r}T^{2}\\
r\bar{T}^{2} & 1
\end{bmatrix}$, $\begin{bmatrix}1 & 0\\
r & 1
\end{bmatrix}$ and $\begin{bmatrix}1 & -\bar{r}T^{2}\\
r\bar{T}^{2} & 1
\end{bmatrix}$ in the present paper. We \textsl{want }to set 
\[
\mathrm{p}=r(e^{-\pi i/4})\overline{T(e^{-\pi i/4})}^{2},\mathrm{q}=-r(e^{-\pi i/4})\overline{T(e^{-\pi i/4})}^{2}=-\bar{r}(e^{-\pi i/4})T(e^{-\pi i/4})^{2},
\]
where $\mathrm{\mathrm{p}}$ and $\mathrm{\mathrm{q}}$ are the parameters
in \cite[(5.33)]{DZ} or \cite[Appendix]{IDNLS2}. Notice that $r(e^{-\pi i/4})\overline{T(e^{-\pi i/4})}^{2}$
must be real. It is possible to reduce our problem to such a case
by a time shift like the one in \cite[\S5]{IDNLS2}. In \cite{IDNLS2},
$r(e^{-\pi i/4})$ was purely imaginary and $T$ was not present. 
\end{proof}
\vspace{1em}
Next we consider the region $|n/t|>2$. 
\begin{thm}
\label{thm:main3}In $2<\mathrm{tw}(z_{s})-d\leq n/t\leq\mathrm{tw}(z_{s})+d$,
where $d$ is sufficiently small, we have 
\[
R_{n}(t)=\mathrm{BS}\left(n,t;z_{s},p_{s}T(z_{s})^{-2}C_{s}(0)\right)+O(n^{-k})\quad\text{as}\;|n|\to\infty
\]
for any positive integer $k$. 

In the solitonless case, i.e. if $\mathrm{tw}(z_{j})\not\in[n/t-d,n/t+d]$
for any $j$, then 
\[
R_{n}(t)=O(n^{-k})\quad\text{as}\;|n|\to\infty
\]
for any positive integer $k$. 
\end{thm}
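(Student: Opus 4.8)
The plan is to run the same sequence of transformations used in the preceding analysis --- pole removal by Proposition~\ref{prop:blowup}, the $T$-conjugation of Proposition~\ref{prop:gyakusu}, lens opening across $C\colon|z|=1$, and inversion by the modified Beals--Coifman formalism of Section~\ref{sec:Modified-Cauchy-kernel} --- but with the $\Delta(z)$-conjugation suppressed (i.e., the function $\delta$ of \eqref{eq:deltaintegral} is absent, exactly as in the proof of Theorem~\ref{thm:main2}), because in the present region $\varphi$ has no stationary point on $C$. Indeed, for $z=e^{i\theta}$ one computes $\varphi(e^{i\theta})=i(-2t\sin^{2}\theta-n\theta)$, so $\partial_{\theta}\mathrm{Im}\,\varphi(e^{i\theta})=-2t\sin 2\theta-n$; for $|n/t|>2$ this never vanishes, and --- taking $n>0$ without loss of generality, as at the start of Section~\ref{sec:Other-regions} --- it is $\le 2t-n<0$ on all of $C$. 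By the Cauchy--Riemann equations $\partial_{r}\mathrm{Re}\,\varphi|_{|z|=1}=\partial_{\theta}\mathrm{Im}\,\varphi|_{|z|=1}<0$, whence $\mathrm{Re}\,\varphi<0$ on a one-sided annulus just outside $C$, $\mathrm{Re}\,\varphi>0$ just inside, and $|\mathrm{Re}\,\varphi(z)|\ge c\,t\,\mathrm{dist}(z,C)$ with $c>0$ depending only on the compact interval $[\mathrm{tw}(z_{s})-d,\mathrm{tw}(z_{s})+d]\subset(2,\infty)$ containing $n/t$. Since $n/t$ stays between two positive constants, $|n|\to\infty$ and $t\to\infty$ are equivalent and comparable, and $e^{\mp 2\varphi}=O(e^{-c'n})$ at a fixed positive distance from $C$ on the side where $\pm\mathrm{Re}\,\varphi<0$.

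First I would apply Proposition~\ref{prop:blowup} to remove every pole quartet $\pm z_{j},\pm\bar z_{j}^{-1}$ with $\mathrm{tw}(z_{j})<n/t-d$; for such $j$ one has $|z_{j}^{-2n}C_{j}(t)|=|C_{j}(0)|e^{2\alpha_{j}t(\mathrm{tw}(z_{j})-n/t)}$ with $\mathrm{tw}(z_{j})-n/t\le-d$, so the resulting jumps on the small circles are $I+O(e^{-cn})$. For every $j$ with $\mathrm{tw}(z_{j})>n/t+d$ the relevant norming constant is instead exponentially large, so I would apply Proposition~\ref{prop:gyakusu} repeatedly with $z_{0}=\pm z_{j}$, exactly as in Proposition~\ref{prop:onlyonepole}; by the remark following Proposition~\ref{prop:gyakusu} the small-circle jumps are again $I+O(e^{-cn})$. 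This introduces $T(z)=\prod_{\mathrm{tw}(z_{k})>n/t+d}R(z,z_{k})=\prod_{k>s}R(z,z_{k})$ (the $\mathrm{tw}(z_{k})$ being $2d$-separated and $n/t\in[\mathrm{tw}(z_{s})-d,\mathrm{tw}(z_{s})+d]$), and --- arguing as in the discussion around \eqref{eq:20150723a} and \eqref{eq:Rnreconst}, only with the $\delta$-factors absent --- the reconstruction formula becomes $R_{n}(t)=-p_{s}\,\frac{d}{dz}m^{\sharp}_{21}(0)$ with $p_{s}=T(0)T(\infty)=\prod_{k>s}z_{k}^{2}\bar z_{k}^{-2}$, $|p_{s}|=1$, while the lone surviving quartet $\pm z_{s},\pm\bar z_{s}^{-1}$ carries norming constant $T(z_{s})^{-2}C_{s}(0)$. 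In the solitonless case there is no index $s$: every pole is removed or conjugated away and none survives.

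Next I would open lenses across $C$. The jump there is $e^{-\varphi\,\mathrm{ad}\sigma_{3}}V$ with $V=\begin{bmatrix}1+|r|^{2}&\bar rT^{2}\\ r\bar T^{2}&1\end{bmatrix}$ and $|T|=1$ on $C$; factoring $V$ as (upper triangular)$\cdot$(lower triangular), the upper factor acquires $e^{-2\varphi}$ and the lower one $e^{+2\varphi}$ under the conjugation, so by the sign computation the upper factor must be continued into $\{\mathrm{Re}\,\varphi>0\}$ (inside $C$) and the lower one into $\{\mathrm{Re}\,\varphi<0\}$ (outside $C$) --- the choice is forced. Since $r$ is only smooth, I first split the entries $r\bar T^{2}$ and $\bar rT^{2}$ into an analytic part --- a Laurent-polynomial approximant of $r$ (resp.\ $\bar r$) of degree $\sim\log t$ times the rational factor $\bar T(z)^{2}$ (resp.\ $T(z)^{2}$), which is analytic and polynomially bounded on a fixed annulus about $C$ avoiding $\pm z_{k},\pm\bar z_{k}^{-1}$ --- plus a remainder of $C^{0}(C)$-norm $O(n^{-k})$, by the rapid decay of the Fourier coefficients of the smooth function $r$. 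On the lens contours, at a fixed distance from $C$, the analytic part is multiplied by $e^{\mp2\varphi}=O(e^{-c'n})$; the remainder stays on $C$, where $|e^{\mp2\varphi}|=1$, contributing $O(n^{-k})$. The result is a Riemann--Hilbert problem for $m^{\sharp}$ whose jump matrix is $I+O(n^{-k})$ everywhere, together with the pole conditions at $\pm z_{s},\pm\bar z_{s}^{-1}$ (norming constant $T(z_{s})^{-2}C_{s}(0)$) --- and nothing else in the solitonless case.

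Finally, by the formalism of Section~\ref{sec:Modified-Cauchy-kernel}, in which $(1-C^{\Omega}_{w^{\sharp}})^{-1}$ now exists trivially by a Neumann series (every $w^{\sharp}$ is $O(n^{-k})$ and no parabolic-cylinder parametrix is needed), $m^{\sharp}$ differs from the pure $1$-soliton matrix $m^{\sharp}_{0}$ of Proposition~\ref{prop:m_0} with norming constant $T(z_{s})^{-2}C_{s}(0)$ by $O(n^{-k})$. Substituting into $R_{n}(t)=-p_{s}\,\frac{d}{dz}m^{\sharp}_{21}(0)$ and using Lemma~\ref{lem:modulus1} to absorb the unimodular $p_{s}$ gives $R_{n}(t)=\mathrm{BS}(n,t;z_{s},p_{s}T(z_{s})^{-2}C_{s}(0))+O(n^{-k})$; in the solitonless case $m^{\sharp}=I+O(n^{-k})$, hence $R_{n}(t)=O(n^{-k})$. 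The proof involves essentially no hard analysis --- there is no nontrivial local parametrix --- and I expect the main obstacle to be bookkeeping: getting the sign of $\partial_{r}\mathrm{Re}\,\varphi$ on $C$ right uniformly as $n/t$ ranges over its compact interval (this fixes the direction of every lens), carrying the $T(z)$-conjugation correctly through the deformation into the final norming constant, and confirming that the surviving pole at $\pm z_{s}$ is reproduced to leading order by exactly the claimed soliton --- the last point being handled just as in the soliton case of Proposition~\ref{prop: soliton}, only with the $\delta$-factors deleted.
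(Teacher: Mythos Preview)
Your proposal is correct and follows essentially the same route as the paper. The paper's own proof is a two-line deferral---``derive a variant of \eqref{eq:msharpdef} adapted to this case'' and ``follow the argument of \cite[\S8]{IDNLS2}''---and what you have written is precisely that variant spelled out: pole removal and $T$-conjugation as in Propositions~\ref{prop:blowup}--\ref{prop:onlyonepole}, the $\Delta(z)$-conjugation suppressed because the signature of $\mathrm{Re}\,\varphi$ is already uniform across $C$ when $|n/t|>2$, a single upper/lower lens opening with a polynomial approximant of $r$ in place of the $R+h_I+h_{II}$ decomposition, and the modified Beals--Coifman representation of Section~\ref{sec:Modified-Cauchy-kernel} with resolvent given by a Neumann series. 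Your sign computation $\partial_r\mathrm{Re}\,\varphi|_{|z|=1}=\partial_\theta\mathrm{Im}\,\varphi|_{|z|=1}=-2t\sin2\theta-n<0$ for $n>2t$ is the key input, and it correctly forces the lens directions and yields the $O(n^{-k})$ bound on all jump matrices; the absence of $\delta$-factors in the final norming constant is exactly as in Theorem~\ref{thm:main2}.
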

\begin{proof}
In order to prove the former case, derive a variant of \eqref{eq:msharpdef}
adapted to this case. To estimate the integral, follow the argument
of \cite[\S8]{IDNLS2}. The latter case can be proved in the same
way as \cite[\S8]{IDNLS2}. 
\end{proof}

\section*{Appendix: Counter-examples}

We show that $a(z)$ may vanish on the unit circle $|z|=1$ and that
it may have double zeros. Moreover, we prove that there may be two
eigenvalues corresponding to the same velocity. 

Assume that $R_{n}=0$ for $n\ne0,1,2$. Then we have (\cite[(3.2.24)]{APT})
\begin{equation}
\phi_{n}(z,t)=z^{n}\begin{bmatrix}1\\
0
\end{bmatrix}\,(n\leqq0),\;\psi_{n}(z,t)=z^{-n}\begin{bmatrix}0\\
1
\end{bmatrix}\,(n\geqq3).\label{eq:counter1}
\end{equation}
The $n$-part \eqref{eq:Ablowitz-Ladik} implies 
\begin{equation}
\phi_{3}(z,t)=\mathcal{M}_{2}\mathcal{M}_{1}\mathcal{M}_{0}\begin{bmatrix}1\\
0
\end{bmatrix}=\begin{bmatrix}z^{3}-(R_{0}\bar{R}_{1}+R_{1}\bar{R}_{2})z-R_{0}\bar{R}_{2}z^{-1}\\
\text{omitted}
\end{bmatrix}.\label{eq:counter2}
\end{equation}

We have the Wronskian formula (\cite[(3.2.64c)]{APT}) 
\[
a(z)=c_{n}W(\phi_{n},\psi_{n}),\;c_{n}=\prod_{k=n}^{\infty}(1+|R_{n}|^{2})\geqq1.
\]
Here $n$ is arbitrary. We apply it to the case $n=3$. We employ
\eqref{eq:counter2} and $\psi_{3}$ is calculated by using \eqref{eq:counter1}.
We set $R_{1}=1$. Then we obtain
\[
a(z)=c_{3}z^{-4}f(z^{2}),\;f(x)=x^{2}-(R_{0}+\bar{R}_{2})x-R_{0}\bar{R}_{2}.
\]

It is elementary that $f(x)$ can have any pair of complex numbers,
say $x_{1}$ and $x_{2}$, as zeros if $R_{0}$ and $\bar{R}_{2}$
are suitably chosen. It is enough to choose $R_{0}$ and $\bar{R}_{2}$
so that 
\[
R_{0}+\bar{R}_{2}=x_{1}+x_{2},\;R_{0}\bar{R}_{2}=-x_{1}x_{2}.
\]

The zeros of $a(z)$ are $\pm x_{1}^{1/2},\pm x_{2}^{1/2}$. If we
choose $x_{1}$ and $x_{2}$ properly, the following three phenomena
can occur:
\begin{itemize}
\item $a(z)$ has zeros on $|z|=1$.
\item $a(z)$ has double zeros.
\item $\mathrm{tw}(x_{1}^{1/2})=\mathrm{tw}(x_{2}^{1/2})$.
\end{itemize}
Of course, a general theory of Darboux transformations is preferable.

\end{document}